\begin{document}
\newtheorem{definition}{Definition}
\newtheorem{theorem}{Theorem}
\newtheorem{example}{Example}
\newtheorem{corollary}{Corollary}
\newtheorem{lemma}{Lemma}
\newtheorem{proposition}{Proposition}
\newenvironment{proof}{{\bf Proof:\ \ }}{\qed}
\newcommand{\qed}{\rule{0.5em}{1.5ex}}
\newcommand{\bfg}[1]{\mbox{\boldmath $#1$\unboldmath}}

\begin{center}

\section*{Aggregation of Dependent Risks in Mixtures of Exponential Distributions and Extensions}

\vskip 0.2in {\sc \bf Jos\'e Mar\'{\i}a Sarabia$^a$\footnote{Corresponding author. E-mail addresses: sarabiaj@unican.es (J.M. Sarabia); emilio.gomez-deniz@ulpgc.es (E. G\'omez-D\'eniz); faustino.prieto@unican.es (F. Prieto); vanesa.jorda@unican.es (V. Jord\'a).}, Emilio G\'omez-D\'eniz$^b$, \\ Faustino Prieto$^a$, Vanesa Jord\'a$^a$}

\small{

\vskip 0.15in

$^a$Department of Economics, University of Cantabria, Avda de los Castros s/n, 39005-Santander, Spain
\vskip 0.1in
$^b$Department of Quantitative Methods in Economics and TiDES Institute, University of Las Palmas de Gran Canaria, 35017-Las Palmas de G.C., Spain
}

\end{center}

\vskip 0.1in

\begin{abstract}\noindent
The distribution of the sum of dependent risks is a crucial aspect in actuarial sciences, risk management and in many branches of applied probability. In this paper, we obtain analytic expressions for the probability density function (pdf) and the cumulative distribution function (cdf) of aggregated risks, modeled according to a mixture of exponential distributions. We  first review the properties of the multivariate mixture of exponential distributions, to then obtain the analytical formulation for the pdf and the cdf for the aggregated distribution. We study in detail some specific families with Pareto (Sarabia et al, 2016), Gamma, Weibull and inverse Gaussian mixture of exponentials (Whitmore and Lee, 1991) claims. We also discuss briefly the computation of risk measures, formulas for the ruin probability (Albrecher et al., 2011) and the collective risk model. An extension of the basic model based on mixtures of gamma distributions is proposed, which is one of the suggested directions for future research.
\end{abstract}

\vskip 0.15in

\noindent\textit{Key Words}: Aggregation, dependent random variables, Laplace transform, collective risk model, partial Bell polynomials

\newpage
\section{Introduction}

The distribution of the sum of risks is a key aspect in actuarial sciences, risk management and in many branches of applied probability. Albeit it can be easily obtained for independent risks, this assumption is in most cases too restrictive, thus being crucial to specify more general models that allow for dependence between different risks. In the recent statistical and actuarial literature, several results about risk aggregation under dependence have been obtained, which deploy different copula structures (see, e.g., Arbenz et al. (2012), Coqueret (2014), Gijbels and Herrmann (2014)).

Cossette et al. (2013) consider risk aggregation and capital allocation problems for a portfolio of dependent risks, modeling the multivariate distribution with the Farlie-Gumbel-Morgenstern (FGM) copula and mixed Erlang distribution marginals. Hashorva and Ratovomirija (2015) considers and extension of previous model introducing the Sarmanov distribution to model the dependence structure, also demonstrating that the aggregated risk belongs to the class of Erlang mixtures. Also using the Sarmanov's distribution to define the dependence structure, Vernic (2016) presents some formulas for the density of the sum of random variables, with a particular focus on exponentially distributed marginals. The multivariate Pareto distributions seems to be an outstanding candidate to model dependent risk. Sarabia et al. (2016) have studied aggregation in multivariate dependent Pareto distributions, also providing closed formulas for the individual risk model, and for the collective risk model assuming Poisson, negative binomial and logarithmic as primary distributions. B{\o}lviken and Guillen (2017) have also considered the Pareto copula, using in this case log-normal marginals, to study risk aggregation, improving the accuracy of the model by updating the skweness recursively. A flexible approach has been recently proposed by C\^ot\'e and Genest (2015), which consist of a consists of a tree structure of bivariate copulas, which are assumed to be conditionally independence.

In this paper, we propose to model the aggregated risk  by a multivariate mixture of exponential distributions. The model proposed by Lindley and Singpurwalla (1986) corresponds to gamma mixtures of exponential distributions. Extensions of this model were provided by Nayak (1987) and Roy and Mukherjee (1988) among others. This model has been widely used in actuarial science, reliability studies and quantitative risk analysis. For instance, it was considered by Albrecher et al. (2011), providing explicit formulas for the ruin probability. However, the question of the aggregation of risks in this model remains open. A particular case was considered in Sarabia et al. (2016), where the mixing distribution is of gamma type. The general multivariate mixture of exponential distributions can be specified for any positive mixing distribution described in terms of Laplace transform. Using this transform, we obtain analytic expressions for the probability density function (pdf) and the cumulative distribution function (cdf) of the aggregated risks.

The rest of the paper is organized as follows. In Section \ref{section2}, we review the properties of the multivariate mixture of exponentials including a characterization theorem (in terms of the copula generator and the marginal distributions), dependence conditions (total positivity of order two in pairs and associated random variables), dependence measures, moments, copula (which belong to the Archimedean family) and other relevant features. In Section \ref{section3}, we obtain the analytical formulation for the pdf and the cdf of the aggregated distribution. Moroever, we include expressions for the survival function, the moments and the value at risk (VaR). Different models are studied in Section \ref{section4}, with a primary focus on claims of Pareto type (Sarabia et al., 2016), Gamma, Weibull with shape parameter 1/2, general Weibull and inverse Gaussian mixture of exponentials (Whitmore and Lee, 1991). In particular, the models with Pareto and Weibull claims have Clayton and Gumbel copulas, respectively. As regards the other multivariate models, their dependence structure is characterized by new families of copulas, which are obtained. For all these models, we obtain specific expressions for the aggregated distribution, and we study some of their main properties. We also discuss briefly the computation of several risk measures including VaR, tail value at risk and other tail measures, formulas for the ruin probability (Albrecher et al., 2011) and the collective risk model. In Section \ref{section5}, we study an extension of the basic multivariate model, given by a multivariate mixture of classical gamma distributions. A specific model is considered,  based on mixtures of gamma and gamma product-ratio claims (Sibuya, 1979). Finally, in Section \ref{section6}, we include some conclusions and ideas about future research.

\section{The model}\label{section2}

Let $\Theta$ be a positive random variable with cdf $F_\Theta(\cdot)$ and  Laplace-Stieltjes transform (hereinafter referred to as Laplace transform) $L_{\Theta}(\cdot)$, that is $L_\Theta(s)=E[\exp(-s\Theta)]=\int_0^\infty e^{-sz}dF_\Theta(z)$. A distribution with support on $(0,\infty)$ is identified by its Laplace transform. We consider the classical compound Poisson risk with exponential claims sizes (Asmussen and Albrecher (2010), Klugman et al. (2008)). Hence, we have a random vector $(X_1,\dots,X_n)$, which comprises independent exponentially distributed random variables conditionally on $\Theta=\theta$, $X_i$,  with hazard rate $\theta$ given by the following stochastic representation,
\begin{eqnarray}
X_i|\Theta=\theta &\sim& Exp(\theta),\;\;iid,\;\;i=1,\dots,n\label{Model_11}\\
\Theta &\sim& F_\Theta(\cdot),\label{Model_12}
\end{eqnarray}
where $Exp(\theta)$ represents an exponential distribution with mean $\frac{1}{\theta}$. The joint conditional survival function is given by,
\begin{equation*}
\Pr(X_1>x_1,\dots,X_n>x_n|\Theta=\theta)=\prod_{i=1}^n\exp(-\theta x_i),\;\;x_i>0,\;i=1,2,\dots,n.
\end{equation*}
The random variable $\Theta$ represents a common random hazard rate shared by all the components or a frailty random variable in a survival context. It should be worth noting that because the components of the vector are conditionally independent and identically distributed they are exchangeable. The following characterization Theorem was provided by Albrecher et al. (2011).

\begin{theorem}
The model $(X_1,\dots,X_n)$ defined in (\ref{Model_11})-(\ref{Model_12}) can be characterized by having marginal claims $X_i$, $i=1,2,\dots,n$ that are completely monotone with a dependence structure due to an Archimedean copula with generator $\phi(u)=L_{\Theta}(u)^{-1}$ for each subset $(X_{j_1},\dots,X_{j_n})$ for $j_1,\dots,j_n$ pairwise different, where $L_{\Theta}(u)$ is the Laplace transform of $F_\Theta$.
\end{theorem}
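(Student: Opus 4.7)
The plan is to verify both claims---completely monotone marginals and an Archimedean copula dependence structure---by directly computing the joint survival function from the conditional representation (\ref{Model_11})--(\ref{Model_12}), together with a short appeal to Bernstein's theorem for the converse direction. The forward direction is little more than bookkeeping; the converse is where the work lies.

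For the marginals, the tower property gives $\Pr(X_i>x) = E[\Pr(X_i>x\mid\Theta)] = E[e^{-\Theta x}] = L_\Theta(x)$. Because $L_\Theta$ is the Laplace transform of a positive measure, it is completely monotone in $x$; equivalently, the marginal density $f_{X_i}(x)=E[\Theta e^{-\Theta x}]$ is itself the Laplace transform of the positive measure $\theta\,dF_\Theta(\theta)$, and is therefore completely monotone. For the dependence structure, conditional independence yields $\Pr(X_1>x_1,\dots,X_n>x_n\mid\Theta) = e^{-\Theta\sum_i x_i}$, and unconditioning gives the joint survival function $\overline{F}(x_1,\dots,x_n) = L_\Theta(x_1+\cdots+x_n)$. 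Setting $u_i = L_\Theta(x_i)$ so that $x_i = L_\Theta^{-1}(u_i)$, the associated survival copula reads
\[
\hat{C}(u_1,\dots,u_n) = L_\Theta\bigl(L_\Theta^{-1}(u_1)+\cdots+L_\Theta^{-1}(u_n)\bigr),
\]
which is precisely the Archimedean form with generator $\phi=L_\Theta^{-1}$.

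For the converse, suppose the marginals have completely monotone survival functions and the copula is Archimedean with generator $L_\Theta^{-1}$ for a genuine Laplace transform $L_\Theta$. By Bernstein's theorem, the marginal survival functions must themselves be of Laplace-transform form, which forces them to coincide with $L_\Theta$, and the Archimedean identity then recovers $\overline{F}(x_1,\dots,x_n)=L_\Theta(\sum_i x_i)=E[e^{-\Theta\sum_i x_i}]$. This is exactly the unconditioning of the conditional-iid exponential stochastic representation, establishing the reverse implication. Invariance under passing to subsets $(X_{j_1},\dots,X_{j_k})$ with pairwise distinct indices is immediate, since marginalizing the conditional-iid model and marginalizing an exchangeable Archimedean structure both preserve the same generator $L_\Theta^{-1}$.

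The main obstacle is the converse direction: one must argue that ``completely monotone marginals'' and ``Archimedean copula whose inverse generator is a Laplace transform'' jointly pin down a \emph{single} frailty distribution $F_\Theta$ that is simultaneously consistent with the marginal laws and with the dependence structure. Bernstein's theorem supplies the required uniqueness by identifying completely monotone functions with Laplace transforms of positive measures, after which matching the two representations of $\overline{F}$ is routine.
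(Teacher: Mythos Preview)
Your forward direction is essentially what the paper does. The paper does not give a self-contained proof but refers the reader to Oakes (1989) and Albrecher et al.\ (2011), and then in the paragraphs following the theorem carries out exactly the computation you describe: it shows $\Pr(X_1>x_1,\dots,X_n>x_n)=L_\Theta\bigl(\sum_i x_i\bigr)$, reads off $\bar F_{X_i}=L_\Theta$, identifies the Archimedean generator as $\phi=L_\Theta^{-1}$, and notes complete monotonicity of the marginals via Nelsen's Theorem 4.6.2. So on that half you and the paper agree, and you are in fact slightly more explicit.

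The gap is in your converse. You claim that Bernstein's theorem ``forces [the marginal survival functions] to coincide with $L_\Theta$'', but Bernstein only tells you that a completely monotone survival function is the Laplace transform of \emph{some} positive measure; it says nothing about which one. Concretely, take marginals with survival $\bar G(x)=L_\Psi(x)$ for a frailty law $\Psi\neq\Theta$, and combine them with an Archimedean survival copula with generator $L_\Theta^{-1}$. Both hypotheses---completely monotone marginals and the stated Archimedean copula---are satisfied, yet the joint survival function is
\[
L_\Theta\Bigl(\textstyle\sum_i L_\Theta^{-1}\bigl(L_\Psi(x_i)\bigr)\Bigr),
\]
which is not $L_\Theta\bigl(\sum_i x_i\bigr)$ in general, so the conditional-iid exponential representation (\ref{Model_11})--(\ref{Model_12}) with frailty $\Theta$ fails. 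Nothing in your argument excludes this. The characterization only closes if one either (i) reads ``completely monotone marginals'' as shorthand for ``marginals with survival function $L_\Theta$'' (which is how the paper's post-theorem discussion effectively proceeds), or (ii) interprets the statement existentially, so that $F_\Theta$ is \emph{defined} by the common marginal survival via Bernstein and the generator is then required to match it. Either reading repairs the converse, but you should make explicit which one you are using; Bernstein alone does not do the work you assign to it.
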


The proof of this result can be found in Oakes (1989) and Albrecher et al. (2011).

The unconditional distribution of joint survival function is given by,
\begin{eqnarray*}
\Pr(X_1>x_1,\dots,X_n>x_n)&=&\int_0^\infty e^{-\theta(x_1+\dots+x_n)}dF_{\Theta}(\theta)\nonumber\\
&=&L_{\Theta}(x_1+\dots+x_n),
\end{eqnarray*}
for each $n$ and $x_1,\dots,x_n>0$. On the other hand, the survival joint function $\bar C$ can be written as,
$$
\Pr(X_1>x_1,\dots,X_n>x_n)=\bar C(\bar F_X(x_1),\dots,\bar F_X(x_n)),
$$
where $\bar F_X(x_i)=\Pr(X_i>x)$, is the marginal survival function, and note that all the marginal distributions are identically distributed. Now, if the survival copula is Archimedean with function generator $\phi$, then,
\begin{equation}\label{GeneralCopula}
\bar C(\bar F_{X}(x_1),\dots,\bar F_{X}(x_n))=\phi^{-1}\left(\phi(\bar F_{X_1}(x_1)+\dots+\phi(\bar F_{X_n}(x_n)\right),
\end{equation}
because the survival function of the marginal distributions are,
\begin{equation}\label{MarginalBasicModel}
\bar F_{X}(x_i)=\int_0^\infty e^{-\theta x_i}dF_\Theta(\theta)=L_{\Theta}(x_i),\;\;i=1,2,\dots,n,
\end{equation}
and then $L_{\Theta}(t)=\phi^{-1}(t)$, where $\phi(\cdot)$ is the copula's generator. The inverse of a Laplace transform of a cdf, $\phi$, is continuous strictly decreasing function from $[0,1]$ to $[0,\infty]$ with $\phi(0)=\infty$ and $\phi(1)=0$ and $\phi^{-1}$ is completely monotone. The Archimedean copula is, therefore, well defined for all $n$ and from (\ref{MarginalBasicModel}) the marginal distributions are completely monotone (Nelsen, 1999, Theorem, 4.6.2).

An alternative writing of the model (\ref{Model_11})-(\ref{Model_12}) is the following stochastic representation in terms of quotients of random variables,
\begin{equation}\label{StochasticRepresentation}
(X_1,\dots,X_n)^\top=\left(\frac{Y_1}{\Theta},\dots,\frac{Y_n}{\Theta}\right)^\top,
\end{equation}
where $Y_i$, $i=1,2,\dots,n$ are iid exponential distributions with mean 1 and $\Theta$ is a positive random variable independent of the $Y_i$. This representation is especially useful for the simulation of samples of $(X_1,\dots,X_n)^\top$, and hence to compute the different features of $S_n=\sum_{i=1}^nX_i$ in an approximate way.

\subsection{Dependence Conditions}

The dependence conditions of this model have been studied by Lee and Gross (1989) and Whitmore and Lee (1991). The multivariate distribution generated by the mixture of exponentials is dependent by total positivity of order 2 (TP2) in each pair of arguments, being the remaining arguments fixed (Barlow and Proschan, 1981). We now prove the concept of associated random variables proposed by Esary et al. (1967) in a simple way using (\ref{StochasticRepresentation}).

\begin{definition}
Random variables $X_1,\dots,X_k$ are said to be associated if
$$
cov(\phi(X_1,\dots,X_k),\psi(X_1,\dots,X_k))\ge 0
$$
for all increasing functions $(\phi,\psi)$ for which the covariance exists.
\end{definition}

The following proposition allows to check previous condition (Esary et al., 1967).
\begin{proposition}\label{Proposition1}
If,
$$
X_i=\phi_i(Y_1,\dots,Y_n),\;\;i=1,2,\dots,k
$$
where $\phi_i$ are increasing functions and $Y_1,\dots,Y_n$ are independent, then $X_1,\dots,X_k$ are associated.
\end{proposition}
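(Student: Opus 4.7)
The plan is to deduce the proposition from the more basic fact that a vector of independent random variables is itself associated, combined with the closure of the association property under componentwise monotone transformations. Thus I would first establish the auxiliary lemma that independence of $Y_1,\ldots,Y_n$ implies association, and then apply it to the compositions appearing in the statement.

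For the auxiliary lemma the base case is a single random variable. If $f,g$ are increasing functions of $Y$, then for any two realisations $Y,Y'$ the product $(f(Y)-f(Y'))(g(Y)-g(Y'))$ is non-negative because both factors share the sign of $Y-Y'$; taking $Y'$ to be an independent copy of $Y$ and computing expectations yields the Chebyshev-type inequality $\operatorname{cov}(f(Y),g(Y))\ge 0$. The general case follows by induction on $n$, using the variance decomposition
$$
\operatorname{cov}(f(\mathbf{Y}),g(\mathbf{Y})) \;=\; E\bigl[\operatorname{cov}(f,g\mid Y_1,\ldots,Y_{n-1})\bigr] \;+\; \operatorname{cov}\bigl(E[f\mid Y_1,\ldots,Y_{n-1}],\,E[g\mid Y_1,\ldots,Y_{n-1}]\bigr),
$$
where $\mathbf{Y}=(Y_1,\ldots,Y_n)$. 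The first term is non-negative by the single-variable case applied to the conditional law of $Y_n$, which by independence coincides with its marginal; the second term is non-negative by the inductive hypothesis, since the conditional expectations remain increasing in $(Y_1,\ldots,Y_{n-1})$.

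To finish the proof of the proposition, let $f,g$ be any two increasing functions of $(X_1,\ldots,X_k)$ whose covariance exists. Since each $X_i=\phi_i(\mathbf{Y})$ is componentwise increasing, the compositions $f(\phi_1(\mathbf{Y}),\ldots,\phi_k(\mathbf{Y}))$ and $g(\phi_1(\mathbf{Y}),\ldots,\phi_k(\mathbf{Y}))$ are increasing functions of the independent vector $\mathbf{Y}$, and the auxiliary lemma delivers the required non-negativity.

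The main obstacle is the verification, within the inductive step of the auxiliary lemma, that $E[f(\mathbf{Y})\mid Y_1,\ldots,Y_{n-1}]$ remains increasing in its conditioning arguments. This is not automatic from monotonicity of $f$ alone, but follows from the independence of $Y_n$ from $(Y_1,\ldots,Y_{n-1})$: the conditional expectation can be written as the integral of $f(y_1,\ldots,y_{n-1},Y_n)$ against the marginal distribution of $Y_n$, and integration against a fixed measure preserves pointwise monotonicity in $(y_1,\ldots,y_{n-1})$. Once this point is established, the remainder of the argument reduces to a routine application of the covariance decomposition and the base case.
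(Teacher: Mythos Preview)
Your argument is correct. Note, however, that the paper does not supply its own proof of this proposition: it is stated there as a known criterion and attributed to Esary, Proschan and Walkup (1967), so there is no in-paper proof to compare against. What you have written is essentially the classical Esary--Proschan--Walkup argument: first show by induction (via the law of total covariance and the one-variable Chebyshev inequality) that independent random variables are associated, and then observe that association is preserved under componentwise increasing maps, which immediately yields the proposition. Your identification and resolution of the only delicate point---that independence is needed to guarantee the conditional expectation $E[f(\mathbf{Y})\mid Y_1,\ldots,Y_{n-1}]$ remains coordinatewise increasing---is exactly right.
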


We have the following result.
\begin{proposition}
The random variables $(X_1,\dots,X_n)^\top$ defined in (\ref{Model_11})-(\ref{Model_12}) are associated.
\end{proposition}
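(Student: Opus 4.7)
The plan is to apply Proposition \ref{Proposition1} directly to the stochastic representation (\ref{StochasticRepresentation}), but a naive attempt fails because $X_i=Y_i/\Theta$ is increasing in $Y_i$ yet \emph{decreasing} in $\Theta$, so the map $(Y_1,\dots,Y_n,\Theta)\mapsto X_i$ is not coordinatewise increasing. The key observation is that this is easily repaired by a change of variable: set $Z:=1/\Theta$. Since $\Theta$ is a positive random variable, $Z$ is well defined and positive almost surely, and because $\Theta$ is independent of $Y_1,\dots,Y_n$, so is $Z$.

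With this reparametrization, (\ref{StochasticRepresentation}) becomes
\begin{equation*}
X_i = Y_i\cdot Z, \qquad i=1,2,\dots,n,
\end{equation*}
so each $X_i$ is expressed as a function $\phi_i(Y_1,\dots,Y_n,Z)=Y_iZ$ of the mutually independent positive random variables $Y_1,\dots,Y_n,Z$. On the positive orthant, $(y_1,\dots,y_n,z)\mapsto y_iz$ is increasing in every argument. Proposition \ref{Proposition1} then yields immediately that $X_1,\dots,X_n$ are associated.

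The main (and only) obstacle is noticing the sign issue with $\Theta$ in the denominator; once one passes to $Z=1/\Theta$, the monotonicity hypothesis of Proposition \ref{Proposition1} is satisfied and the conclusion follows in one line. No further computation is needed, and the argument works for any positive mixing distribution $F_\Theta$, in agreement with the generality of the model (\ref{Model_11})--(\ref{Model_12}).
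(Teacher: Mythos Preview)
Your proof is correct and follows essentially the same approach as the paper, which simply invokes Proposition~\ref{Proposition1} together with the stochastic representation~(\ref{StochasticRepresentation}). Your explicit reparametrization $Z=1/\Theta$ to ensure coordinatewise monotonicity is a detail the paper's one-line proof leaves implicit, so your argument is in fact slightly more careful.
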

\begin{proof}
The proof is direct taking into account Proposition \ref{Proposition1} and the stochastic representation (\ref{StochasticRepresentation}).
\end{proof}

\subsection{Joint moments and dependence measures}

The joint moments of $(X_1,\dots,X_n)$ can be obtained using the following expression (Whitmore and Lee, 1991),
\begin{eqnarray}
E(X_1^{r_1}\cdots X_n^{r_n})&=&E\left(E(X_1^{r_1}\cdots X_n^{r_n}|\Theta)\right),\nonumber\\
&=&\prod_{j=1}^n\Gamma(r_j+1)E\left(\Theta^{-(r_1+\dots+r_n)}\right).\label{JointMoments}
\end{eqnarray}

On the other hand, taking into account the Archimedean character of $(X_1,\dots,X_n)$ we can obtain easily some correlations coefficients. Using (\ref{JointMoments}), Whitmore and Lee (1991) have obtained a simple expression for the linear correlation coefficient between pairs $(X_i,X_j)$. Setting $W=\Theta^{-1}$, we have
\begin{equation}\label{linealcorrelation}
\rho_{ij}=\rho(X_i,X_j)=\frac{E(W^2)-E^2(W)}{2E(W^2)-E^2(W)},
\end{equation}
and $E(W^2)\ge E^2(W)$, then $\rho_{ij}\ge 0$, which is a direct consequence of being TP2 and associated random variables.

The Kendall's tau, $\tau_{ij}$, for each bivariate pair of variable can be obtained easily from the generator function of the copula $\phi(\cdot)$. If $\frac{\phi(0)}{\phi'(0)}=0$ we have (Genest and MacKay, 1986),
\begin{eqnarray}\label{taukendall}
\tau_{ij}=\tau(X_i,X_j)=1+4\int_0^1\frac{\phi(t)}{\phi'(t)}dt.
\end{eqnarray}

\section{The distribution of the aggregated risk}\label{section3}

In this section we obtain the probability density function and the survival function of the aggregated risk $S_n$ and closed expressions for the raw moments.

\subsection{Basic result}\label{Sectionbasicresult}

\begin{theorem}\label{Theoremmain}
Let $\Theta$ be a positive random variable with cdf $F_\Theta(\cdot)$ and Laplace transform $L_{\Theta}(\cdot)$. Assume that, given $\Theta=\theta$, the random variables $X_1,\dots,X_n$ are conditionally independent and distributed as exponential $Exp(\theta)$, according to model (\ref{Model_11})-(\ref{Model_12}). Then, the pdf of the aggregated random variable $S_n=X_1+\dots+X_n$ is given by,
\begin{equation}\label{mainformula}
\displaystyle f_{S_n}(x)=\frac{x^{n-1}}{\Gamma(n)}\left\{(-1)^n\frac{d^n}{dx^n}L_{\Theta}(x)\right\},\;\;x\ge 0
\end{equation}
and $f_{S_n}(x)=0$ if $x<0$.
\end{theorem}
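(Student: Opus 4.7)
The plan is to condition on $\Theta$ and use the fact that conditionally the sum is Erlang (Gamma with integer shape), then recognize the resulting integral as a derivative of the Laplace transform.

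First I would observe that, given $\Theta=\theta$, the variables $X_1,\dots,X_n$ are i.i.d.\ $Exp(\theta)$ by (\ref{Model_11}), so the conditional distribution of $S_n$ is Erlang with shape $n$ and rate $\theta$; its conditional density is
\begin{equation*}
f_{S_n\mid\Theta}(x\mid\theta)=\frac{\theta^n x^{n-1}e^{-\theta x}}{\Gamma(n)},\qquad x\ge 0.
\end{equation*}
Integrating against the mixing law $F_\Theta$ yields
\begin{equation*}
f_{S_n}(x)=\int_0^\infty \frac{\theta^n x^{n-1}e^{-\theta x}}{\Gamma(n)}\,dF_\Theta(\theta)=\frac{x^{n-1}}{\Gamma(n)}\int_0^\infty \theta^n e^{-\theta x}\,dF_\Theta(\theta).
\end{equation*}

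Next I would identify the remaining integral with a derivative of $L_\Theta$. Starting from $L_\Theta(x)=\int_0^\infty e^{-\theta x}\,dF_\Theta(\theta)$ and differentiating $n$ times under the integral sign gives
\begin{equation*}
\frac{d^n}{dx^n}L_\Theta(x)=\int_0^\infty (-\theta)^n e^{-\theta x}\,dF_\Theta(\theta)=(-1)^n\int_0^\infty \theta^n e^{-\theta x}\,dF_\Theta(\theta),
\end{equation*}
so that $\int_0^\infty \theta^n e^{-\theta x}\,dF_\Theta(\theta)=(-1)^n \frac{d^n}{dx^n}L_\Theta(x)$. Plugging this into the previous display delivers (\ref{mainformula}).

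The only nontrivial point is the justification for differentiating under the integral. For each $x_0>0$ and any $x$ in a small neighbourhood $[x_0-\delta,x_0+\delta]\subset(0,\infty)$, the integrand $\theta^n e^{-\theta x}$ is dominated by $\theta^n e^{-\theta(x_0-\delta)}$, which is $F_\Theta$-integrable because $\Theta>0$ a.s.\ and $e^{-\theta(x_0-\delta)}$ decays exponentially at infinity while $\theta^n$ is bounded near $0$. Dominated convergence therefore justifies the interchange, so (\ref{mainformula}) holds for all $x>0$; the case $x=0$ follows from continuity (or vacuously from the factor $x^{n-1}$ when $n\ge 2$). Since $S_n$ is a sum of nonnegative random variables, $f_{S_n}(x)=0$ for $x<0$ is immediate.
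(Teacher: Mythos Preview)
Your proof is correct and follows essentially the same route as the paper: condition on $\Theta$ to obtain the Erlang/gamma density, integrate against $F_\Theta$, and recognize the resulting integral as $(-1)^n L_\Theta^{(n)}(x)$. The only difference is that you add a dominated-convergence justification for differentiating under the integral, which the paper omits.
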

\begin{proof}
The unconditional distribution of $S_n$ is,
\begin{equation}\label{IntegralSn}
\displaystyle  f_{S_n}(x)=\int_0^\infty f_{S_n|\Theta}(x|\theta)dF_\Theta(\theta).
\end{equation}
Since the conditional distribution $S_n|\Theta\sim {\cal G}a(n,\theta)$ is a classical gamma distribution we have,
\begin{eqnarray*}
 f_{S_n}(x)&=&\int_0^\infty \frac{\theta^nx^{n-1}e^{-\theta x}}{\Gamma(n)}dF_\Theta(\theta) \\
 &=&\frac{x^{n-1}}{\Gamma(n)}\int_0^\infty \theta^ne^{-\theta x}dF_\Theta(\theta)\\
 &=&\frac{x^{n-1}}{\Gamma(n)}\left\{(-1)^n\frac{d^n}{dx^n}L_{\Theta}(x)\right\},
\end{eqnarray*}
and we have the result.
\end{proof}\\

Theorem \ref{Theoremmain} provides a new and simple way to obtain the probability density function of the aggregated risk in the case of dependence, using the successive derivatives of the Laplace transform. Note that $f_{S_n}(x)$ is well defined since $(-1)^kd^kL_\Theta(x)/dx^k\ge 0$, for all $k\ge 1$, because $L_\Theta(x)$ is a Laplace transform. Hence, we have three ways to compute the distribution of the aggregated risk $S_n$:
\begin{enumerate}
\item To use the stochastic representation (\ref{StochasticRepresentation}), and to obtain the pdf of $S_n=\frac{\sum_{j=1}^nY_j}{\Theta}$, as the quotient of two independent random variables.
\item To compute directly the integral (\ref{IntegralSn}).
\item To use equation (\ref{mainformula}), computing the $n$th derivative of the Laplace transform.
\end{enumerate}

\subsection{Survival function}

The survival function of the distribution of $S_n$ can be written in a simple way, using the cdf of a gamma distribution, where the shape parameter is an integer number. If $x>0$, we have,
\begin{eqnarray*}
\displaystyle\Pr(S_n>x)&=&\int_0^\infty \Pr(S_n>x|\Theta=\theta)dF_\Theta(\theta)\\
\displaystyle&=&\int_0^\infty \sum_{k=1}^{n-1}\frac{(\theta x)^ke^{-\theta x}}{k!}dF_\Theta(\theta)\\
\displaystyle&=&\sum_{k=1}^{n-1}\frac{x^{k}}{k!}\left\{(-1)^k\frac{d^k}{dx^k}L_{\Theta}(x)\right\}.
\end{eqnarray*}

Then, the survival function can be computed using the first $n-1$ first derivatives of the Laplace function.

\subsection{Moments}

The moments of $S_n$ can be obtained easily in terms of the negative moments of $\Theta$. We have,
$$
E(S_n^r)=\frac{\Gamma(n+r)}{\Gamma(n)}E(\Theta^{-r}),
$$
if $E(\Theta^{-r})<\infty$. The mean and variance of $S_n$ can be computed using the formulas,
\begin{eqnarray}
E(S_n)&=&nE(\Theta^{-1}),\label{MeanSn}\\
var(S_n)&=&nE(\Theta^{-2})+n^2var(\Theta^{-1})\label{VarianceSn},
\end{eqnarray}
assuming $E(\Theta^{-2})<\infty$.

\subsection{Value at risk}

Value at risk at level $\alpha$, with $0<\alpha<1$ of a random variable $X$ with cdf $F(x)$ is defined as,
$$\mbox{VaR}[X;\alpha]=\inf\{x\in\mathbb{R},\;F(x)\ge \alpha\}.$$
In our case, the VaR of the aggregated distribution $S_n$ is the only solution in $x$ of the equation,
$$\sum_{k=1}^{n-1}\frac{x^{k}}{k!}\left\{(-1)^k\frac{d^k}{dx^k}L_{\Theta}(x)\right\}=1-\alpha.$$
This equation can be solved numerically.

\section{Models}\label{section4}

In this section, we will study five specific mixtures of exponential distributions. We will consider dependent models with different copulas and claims of the type Pareto, classical gamma, Weibull with shape parameter $\alpha=\frac{1}{2}$, general Weibull and inverse Gaussian mixture distributions.

\subsection{Pareto Claims and Clayton Copula Dependence}

Assume that the random variable $\Theta\sim {\cal G}a(\alpha,\beta)$ is distributed as a gamma distribution with pdf,
$$
f_\Theta(\theta)=\frac{\beta^\alpha\theta^{\alpha-1}e^{-\beta\theta}}{\Gamma(\alpha)},\;\;\theta>0
$$
and Laplace transform $L_\Theta(t)=(1+t/\beta)^{-\alpha}$. The generator function of the Archimedean copula is $\phi(t)=t^{-1/\alpha}-1$. The marginal distribution $X_i$, $i=1,2,\dots,n$ are,
\begin{equation}
\bar F_{X_i}(x)=L_\Theta(x)=\frac{1}{(1+x/\beta)^\alpha},\;\;x\ge 0,\;\;i=1,2,\dots,n,\label{plt}
\end{equation}

which corresponds to a Pareto distribution ${\cal P}a(\alpha,\beta)$. The joint survival function of $(X_1,\dots,X_n)$ is,
$$
\displaystyle \Pr(X_1>x_1,\dots,X_n>x_n)=\frac{1}{\left(1+\sum\limits_{i=1}^nx_i/\beta\right)^\alpha},\;\;x_i\ge 0,\;\;i=1,2,\dots,n,
$$
which is the joint survival function of a Pareto II distribution proposed by Arnold (1983, 2015). Using (\ref{GeneralCopula}), the survival copula associated is,
\begin{equation}\label{ClaytonCopula}
\bar C(u_1,\dots,u_n)=(u_1^{-1/\alpha}+\dots+u_n^{-1/\alpha}-n+1)^{-\alpha},
\end{equation}
which is a Clayton copula. The dependence increases with $\alpha$, being the independence case obtained with $\alpha\to 0$ and the Fr\'echet upper bound when $\alpha\to\infty$. The distribution of the sum is given in the following Theorem.

\begin{theorem}
Let consider the dependent risk model $(X_1,\dots,X_n)$, with Pareto marginals with shape parameter $\alpha$, scale parameter $\beta$ and Clayton survival copula defined in (\ref{ClaytonCopula}). Then, the pdf of the aggregated risk $S_n$ is given by,
\begin{equation}\label{SumParetoDependent}
\displaystyle f_{S_n}(x)=\frac{x^{n-1}}{\beta^nB(n,\alpha)(1+x/\beta)^{n+\alpha}},\;\;x\ge 0,
\end{equation}
and $f_{S_n}(x)=0$ is $x<0$.
\end{theorem}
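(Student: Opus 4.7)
The plan is to invoke Theorem \ref{Theoremmain} directly, since the model here is precisely the exponential mixture with mixing distribution $\Theta \sim \mathcal{G}a(\alpha,\beta)$, whose Laplace transform is the closed-form expression $L_\Theta(t) = (1+t/\beta)^{-\alpha}$ already recorded in the preamble of this subsection. According to the master formula \eqref{mainformula}, all that remains is to compute the $n$-th derivative of $L_\Theta(x)$ and substitute.

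First I would differentiate $L_\Theta(x) = (1+x/\beta)^{-\alpha}$ repeatedly. Each differentiation pulls out a factor of $-1/\beta$ and drops the exponent by $1$, giving, by a one-line induction,
\[
\frac{d^n}{dx^n} L_\Theta(x) \;=\; (-1)^n\, \frac{\alpha(\alpha+1)\cdots(\alpha+n-1)}{\beta^n}\, (1+x/\beta)^{-(\alpha+n)} \;=\; (-1)^n\, \frac{\Gamma(\alpha+n)}{\Gamma(\alpha)\,\beta^n}\, (1+x/\beta)^{-(\alpha+n)}.
\]
The sign $(-1)^n$ cancels the one in \eqref{mainformula}, and the rising factorial has been rewritten with gamma functions to match the target expression.

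Substituting into \eqref{mainformula} gives
\[
f_{S_n}(x) \;=\; \frac{x^{n-1}}{\Gamma(n)} \cdot \frac{\Gamma(\alpha+n)}{\Gamma(\alpha)\,\beta^n}\, (1+x/\beta)^{-(\alpha+n)},
\]
and the final step is to recognize $\Gamma(n)\Gamma(\alpha)/\Gamma(n+\alpha) = B(n,\alpha)$, which collapses the constant to $1/[\beta^n B(n,\alpha)]$ and yields exactly \eqref{SumParetoDependent}. The support constraint $x \geq 0$ is inherited from the positivity of $S_n$ as a sum of nonnegative exponentials.

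I do not expect any real obstacle here: the only genuine computation is the $n$-th derivative of a simple power function, which is standard. The utility of the argument is conceptual rather than technical — it illustrates how Theorem \ref{Theoremmain} reduces the aggregation problem under Clayton-type dependence to a one-variable differentiation, bypassing any multivariate convolution or copula integration.
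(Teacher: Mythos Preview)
Your proof is correct and follows exactly the paper's approach: apply Theorem~\ref{Theoremmain}, differentiate $L_\Theta(x)=(1+x/\beta)^{-\alpha}$ explicitly $n$ times, and rewrite the resulting gamma ratio as a beta function. In fact your rising factorial $\alpha(\alpha+1)\cdots(\alpha+n-1)$ is the correct intermediate expression, whereas the paper's display writes the falling factorial $\alpha(\alpha-1)\cdots(\alpha-n+1)$ (a typo) before arriving at the same $\Gamma(\alpha+n)/\Gamma(\alpha)$.
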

\begin{proof}
Since the Laplace transform of $\Theta$ is given by (\ref{plt}), using (\ref{mainformula}) we have,
$$
\displaystyle f_{S_n}(x)=\frac{x^{n-1}}{\Gamma(n)}\left\{(-1)^n\frac{d^n}{dx^n}\frac{1}{(1+x/\beta)^\alpha}\right\},
$$
and then,
\begin{eqnarray*}
L^{(n)}_\Theta(x)&=&\frac{(-1)^n\alpha(\alpha-1)\cdots (\alpha-n+1)}{\beta^n(1+x/\beta)^{\alpha+n}}\\
&=&\frac{(-1)^n\Gamma(\alpha+n)}{\beta^n\Gamma(\alpha)(1+x/\beta)^{\alpha+n}},
\end{eqnarray*}
and we obtain the result.
\end{proof}

Deploying on different methodologies, this formula was obtained by Guill\'en et al. (2013), Dacarogna et al. (2015) and Sarabia et al. (2016). As the pdf (\ref{SumParetoDependent}) is a second kind beta distribution $S_n\sim {\cal B}2(\alpha,n,\beta)$ (see McDonald, 1984). The raw moments are,
$$
\displaystyle E(S_n^r)=\frac{\beta^r\Gamma(n+r)\Gamma(\alpha-r)}{\Gamma(n+\alpha)},\;\;\mbox{if}\;\;\alpha>r.
$$


\subsection{Dependent Gamma Claims}

Our next model is based on gamma claims. The gamma distribution with shape parameter $\alpha\in(0,1]$ is completely monotone and then, it can be accommodated to the general model introduced in Section \ref{section2}. We have the following theorem (Gleser, (1989) and Albrecher and Kortschak (2009)).
\begin{theorem}
Let $X\sim {\cal G}a(\alpha,\lambda)$ be a gamma distribution with scale parameter $\lambda$ and shape parameter $\alpha\in(0,1]$ and pdf,
\begin{equation*}
f_X(x)=\frac{\lambda^\alpha x^{\alpha-1}e^{-\lambda x}}{\Gamma(\alpha)},\;\;x> 0.
\end{equation*}
Then,
$$
f_X(x)=\int_0^\infty \theta e^{-\theta x}f_\Theta(\theta)d\theta,
$$
where
\begin{equation}\label{mixinggamma}
f_\Theta(\theta)=\frac{(\theta-\lambda)^{-\alpha}\lambda^\alpha}{\theta\Gamma(1-\alpha)\Gamma(\alpha)},\;\;\lambda\le \theta<\infty,
\end{equation}
and $f_\Theta(\theta)=0$ otherwise.
\end{theorem}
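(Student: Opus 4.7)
The plan is to verify the claimed representation by direct computation: substitute the candidate mixing density $f_\Theta$ into the integral $\int_0^\infty \theta e^{-\theta x} f_\Theta(\theta)\,d\theta$ and show that it equals the gamma density $f_X(x)$. Since $f_\Theta$ is supported on $[\lambda,\infty)$, the $\theta$ in the numerator $\theta e^{-\theta x}$ cancels the $\theta$ in the denominator of $f_\Theta(\theta)$, which is the clean algebraic step that makes the approach tractable.

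After the cancellation, the integral to evaluate is
$$\frac{\lambda^\alpha}{\Gamma(1-\alpha)\Gamma(\alpha)}\int_\lambda^\infty e^{-\theta x}(\theta-\lambda)^{-\alpha}\,d\theta.$$
I would then perform the shift $u=\theta-\lambda$ to factor out $e^{-\lambda x}$, followed by the rescaling $v=ux$ to bring the integral to the form $\int_0^\infty v^{-\alpha}e^{-v}\,dv=\Gamma(1-\alpha)$. The $\Gamma(1-\alpha)$ factor cancels the one in the denominator, leaving exactly $\lambda^\alpha x^{\alpha-1}e^{-\lambda x}/\Gamma(\alpha)=f_X(x)$, as required. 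The only thing needed for convergence at the lower endpoint is $1-\alpha>0$, which is exactly the hypothesis $\alpha\in(0,1)$.

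The only delicate point concerns the boundary case $\alpha=1$. There $\Gamma(1-\alpha)=\Gamma(0)$ diverges and the expression in (\ref{mixinggamma}) has to be interpreted in a limiting sense; as $\alpha\uparrow 1$ the measure $f_\Theta(\theta)\,d\theta$ concentrates at $\theta=\lambda$, and one recovers the degenerate mixing rule $\Theta\equiv\lambda$, which is consistent with $X\sim\mathcal{G}a(1,\lambda)=\mathrm{Exp}(\lambda)$. As a sanity check it is also worth verifying that $f_\Theta$ integrates to one: substituting $t=\lambda/\theta$ converts $\int_\lambda^\infty f_\Theta(\theta)\,d\theta$ into the Beta integral $B(\alpha,1-\alpha)/[\Gamma(\alpha)\Gamma(1-\alpha)]=1$, confirming that $f_\Theta$ is a bona fide probability density. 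I do not anticipate a serious obstacle here; the whole argument is essentially one substitution plus recognition of the gamma function.
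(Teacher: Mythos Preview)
Your proposal is correct: the cancellation of the factor $\theta$, the shift $u=\theta-\lambda$, and the rescaling $v=ux$ reduce the integral to $\Gamma(1-\alpha)$ exactly as you describe, and the convergence condition $\alpha<1$ is precisely the hypothesis in play. Your handling of the boundary case $\alpha=1$ as a degenerate limit and the Beta-integral sanity check that $f_\Theta$ is a genuine density are both accurate.

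As for comparison with the paper: the paper does not actually supply a proof of this theorem. It states the result and attributes it to Gleser (1989) and Albrecher and Kortschak (2009), then moves on to compute the Laplace transform of $f_\Theta$. Your direct verification is the standard elementary argument for this identity (and is essentially what appears in Gleser's original note), so there is no alternative route in the paper to contrast with.
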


The following lemma provides the Laplace transform of the mixing density (\ref{mixinggamma}).
\begin{lemma}
The Laplace transform of the random variable $\Theta$ with pdf (\ref{mixinggamma}) is,
\begin{equation}\label{LaplaceGammaClaims}
L_\Theta(s)=\frac{\Gamma(\alpha,\lambda s)}{\Gamma(\alpha)},\;\;s\ge 0,
\end{equation}
where $\Gamma(s,x)=\int_x^\infty t^{s-1}e^{-t}dt$ denotes the upper incomplete gamma function.
\end{lemma}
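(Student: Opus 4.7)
The plan is to bypass the direct integral $\int_\lambda^\infty e^{-s\theta} f_\Theta(\theta)\, d\theta$ (the factor $1/\theta$ in $f_\Theta$ makes this awkward) and instead exploit the representation established in the preceding theorem. That result states
\[
f_X(x) = \int_0^\infty \theta e^{-\theta x} f_\Theta(\theta)\, d\theta,
\]
where $X\sim \mathcal{G}a(\alpha,\lambda)$. Observing that $\theta e^{-\theta x} = -\frac{d}{dx}e^{-\theta x}$, differentiation under the integral sign gives $f_X(x) = -L_\Theta'(x)$.

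From here I would integrate from $s$ to $\infty$ and use $L_\Theta(\infty)=0$ (valid since $\Theta\ge \lambda>0$ almost surely) to obtain
\[
L_\Theta(s) = \int_s^\infty f_X(x)\, dx = \bar{F}_X(s),
\]
i.e.\ the Laplace transform of $\Theta$ is precisely the survival function of a $\mathcal{G}a(\alpha,\lambda)$ random variable evaluated at $s$. A routine substitution $u=\lambda x$ in
\[
\bar{F}_X(s) = \int_s^\infty \frac{\lambda^\alpha x^{\alpha-1} e^{-\lambda x}}{\Gamma(\alpha)}\, dx
\]
then collapses the integral to $\frac{1}{\Gamma(\alpha)}\int_{\lambda s}^\infty u^{\alpha-1} e^{-u}\, du = \frac{\Gamma(\alpha,\lambda s)}{\Gamma(\alpha)}$, which is the desired identity.

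The only point needing a brief justification is the interchange of differentiation and integration used to conclude $f_X(x)=-L_\Theta'(x)$, but this is standard: the integrand $\theta e^{-\theta x}f_\Theta(\theta)$ is dominated on any interval $x\ge s_0>0$ by $\theta e^{-\theta s_0/2}\cdot e^{-\theta s_0/2}f_\Theta(\theta)$, and the first factor is bounded while the second is integrable since $f_X(s_0/2)<\infty$. I do not anticipate any genuine obstacle; the main conceptual step is simply to recognize that the $\theta$ in the mixing representation of $f_X$ is exactly what turns the mixing integral into $-L_\Theta'$, so that one level of antidifferentiation recovers $L_\Theta$ itself. An alternative, more brute‑force route would substitute $u=\theta-\lambda$ in the defining integral for $L_\Theta(s)$ and attempt to reduce it to a beta‑type integral, but the survival‑function argument above is both shorter and better aligned with the structure already built up in the paper.
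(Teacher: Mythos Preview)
Your argument is correct. The paper's own proof consists of the single sentence ``The proof is direct using the pdf defined in (\ref{mixinggamma}),'' so there is little to compare at the level of detail; that wording suggests a head-on evaluation of $\int_\lambda^\infty e^{-s\theta}\,(\theta-\lambda)^{-\alpha}\lambda^\alpha\,\theta^{-1}\,d\theta/[\Gamma(1-\alpha)\Gamma(\alpha)]$, which is doable but not especially pleasant because of the $1/\theta$ factor. Your route is genuinely different and cleaner: by reading the mixture representation $f_X(x)=\int \theta e^{-\theta x}f_\Theta(\theta)\,d\theta$ as $f_X=-L_\Theta'$ and integrating, you recover the general identity $L_\Theta(s)=\bar F_X(s)$ already recorded in the paper as equation~(\ref{MarginalBasicModel}), and then the claim reduces to the standard form of the gamma survival function. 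This buys you a one-line computation in place of a special-function integral, and it makes transparent \emph{why} the Laplace transform equals $\Gamma(\alpha,\lambda s)/\Gamma(\alpha)$: the mixing distribution was constructed precisely so that the marginal is ${\cal G}a(\alpha,\lambda)$. The dominated-convergence justification you sketch for differentiating under the integral sign is adequate.
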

\begin{proof}
The proof is direct using the pdf defined in (\ref{mixinggamma}).
\end{proof}

Using previous lemma, we get the generator function of the corresponding copula, which is given by,
\begin{equation}\label{GeneratorGammaClaims}
\phi(t)=Q_{G_\alpha}(1-t),
\end{equation}
where $Q_{G_\alpha}(u)$ represents the quantile function of a gamma distribution with mean $\alpha$ and unit scale parameter. Using Equation (\ref{LaplaceGammaClaims}), the joint survival function is,
$$
\displaystyle \Pr(X_1>x_1,\dots,X_n>x_n)=\frac{\Gamma\left(\alpha,\lambda\sum_{j=1}^nx_j\right)}{\Gamma(\alpha)},
$$
if $x_1,\dots,x_n\ge 0$, with marginal survival functions,
$$
\displaystyle \Pr(X_i>x)=\frac{\Gamma\left(\alpha,\lambda x\right)}{\Gamma(\alpha)},\;\;x\ge 0,\;i=1,2,\dots,n.
$$
The associated copula is given in the following Theorem.
\begin{theorem}
The survival copula associated to the Exponential-Gamma dependent model is given by,
\begin{equation}\label{GammaCopula}
\bar C(u_1,\dots,u_n)=1-F_{G_\alpha}\left(Q_{G_\alpha}(1-u_1)+\dots+Q_{G_\alpha}(1-u_2)\right),
\end{equation}
where $F_{G_\alpha}(\cdot)$ and $Q_{G_\alpha}(\cdot)$ represent the cdf and the quantile function, respectively of the gamma distribution with shape parameter $\alpha$.
\end{theorem}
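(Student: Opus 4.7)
The proof plan is essentially a direct application of the Archimedean structure established in Theorem 1, specialized to the generator derived in equation (\ref{GeneratorGammaClaims}) for the gamma-claims model.

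First, I would invoke Theorem 1 to assert that the survival copula of $(X_1,\dots,X_n)$ is Archimedean, with the representation already recorded in equation (\ref{GeneralCopula}), namely
\begin{equation*}
\bar C(u_1,\dots,u_n) = \phi^{-1}\bigl(\phi(u_1) + \dots + \phi(u_n)\bigr),
\end{equation*}
where $\phi(t) = L_\Theta^{-1}(t)$ is the generator determined by the mixing distribution. Then I would substitute the explicit form of the generator given in (\ref{GeneratorGammaClaims}), $\phi(t) = Q_{G_\alpha}(1-t)$, which is itself obtained by inverting the Laplace transform (\ref{LaplaceGammaClaims}) through the identification $\Gamma(\alpha,x)/\Gamma(\alpha) = 1 - F_{G_\alpha}(x)$.

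Next, I would compute $\phi^{-1}$ explicitly. Setting $s = \phi(t) = Q_{G_\alpha}(1-t)$ and applying $F_{G_\alpha}$ to both sides yields $F_{G_\alpha}(s) = 1-t$, hence $\phi^{-1}(s) = 1 - F_{G_\alpha}(s)$. Plugging $\phi$ and $\phi^{-1}$ into the Archimedean formula above gives the claimed identity
\begin{equation*}
\bar C(u_1,\dots,u_n) = 1 - F_{G_\alpha}\bigl(Q_{G_\alpha}(1-u_1) + \dots + Q_{G_\alpha}(1-u_n)\bigr),
\end{equation*}
which is exactly (\ref{GammaCopula}).

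There is no real obstacle here beyond bookkeeping: the only point requiring care is verifying the inverse relationship $\phi^{-1}(s) = 1-F_{G_\alpha}(s)$, which is immediate from the definition of the quantile function, and confirming that the scale parameter $\lambda$ drops out of the copula (as it must, since copulas are invariant under strictly increasing marginal transformations). I would also remark briefly that the range and monotonicity conditions for a valid Archimedean generator ($\phi$ continuous, strictly decreasing, with $\phi(0)=\infty$, $\phi(1)=0$) are automatically satisfied because $Q_{G_\alpha}$ maps $(0,1)$ onto $(0,\infty)$ and $\phi^{-1}$ inherits complete monotonicity from $L_\Theta$, consistent with the general discussion following (\ref{MarginalBasicModel}).
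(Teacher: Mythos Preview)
Your proposal is correct and follows exactly the same approach as the paper, which simply states that the result is direct from equation~(\ref{GeneralCopula}) and the generator~(\ref{GeneratorGammaClaims}). You have merely supplied the intermediate step of explicitly inverting $\phi$ via $\phi^{-1}(s)=1-F_{G_\alpha}(s)$, which the paper leaves implicit.
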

\begin{proof}
The proof is direct by considering equation (\ref{GeneralCopula}) and the generator function (\ref{GeneratorGammaClaims}).
\end{proof}

\begin{theorem}
Let consider the dependent risk model $(X_1,\dots,X_n)$, with gamma marginals with shape parameter $\alpha\in (0,1)$, scale parameter $\lambda$ and survival copula defined in (\ref{GammaCopula}). Then, the pdf of the aggregated risk $S_n$ is given by,
\begin{equation}\label{SumGammaClaims}
\displaystyle f_{S_n}(x)=\sum_{k=0}^{n-1}\frac{(-1)^k(\alpha-1)_k}{\Gamma(\alpha)k!(n-k-1)!}\lambda(\lambda x)^{n+\alpha-k-2}e^{-\lambda x},\;\;x\ge 0
\end{equation}
with $n=2,3,\dots$, $f_{S_n}(x)=0$ if $x<0$ and $(a)_n=a(a-1)\cdots (a-n+1)$ is the Pochhammer symbol. Previous pdf can be written as a finite mixture of Gamma distributions ${\cal G}a(\alpha_k,\lambda)$,
$$
f_{S_n}(x)=\sum_{k=0}^{n-1}w_kf_{{\cal G}a(\alpha_k,\lambda)}(x),
$$
with shape parameters,
\begin{equation}\label{akGamma}
\alpha_k=n+\alpha-k-1,\;\;k=0,1,\dots,n-1,
\end{equation}
and weights (positive and negatives)
\begin{equation}\label{weightsGamma}
w_k=\frac{(-1)^k(\alpha-1)_k\Gamma(n+\alpha-k-1)}{\Gamma(\alpha)k!(n-k-1)!},\;\;k=0,1,\dots,n-1.
\end{equation}

\end{theorem}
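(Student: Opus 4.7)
The plan is to apply Theorem \ref{Theoremmain} directly using the Laplace transform $L_\Theta(s)=\Gamma(\alpha,\lambda s)/\Gamma(\alpha)$ already computed in the preceding lemma. According to the main formula (\ref{mainformula}), everything reduces to computing the $n$-th derivative of $L_\Theta$ at $x>0$. I would first use the fundamental theorem of calculus on the upper incomplete gamma function, $\frac{d}{dz}\Gamma(\alpha,z)=-z^{\alpha-1}e^{-z}$, composed with $z=\lambda x$. This yields the clean first derivative
\[
L'_\Theta(x)=-\frac{\lambda}{\Gamma(\alpha)}(\lambda x)^{\alpha-1}e^{-\lambda x},
\]
so the remaining work is to differentiate the product $(\lambda x)^{\alpha-1}e^{-\lambda x}$ exactly $n-1$ more times.

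For this step I would invoke the general Leibniz rule. The $k$-th derivative of $(\lambda x)^{\alpha-1}$ is $\lambda^{k}(\alpha-1)_k(\lambda x)^{\alpha-1-k}$ with the falling factorial convention of the statement, while the $(n-1-k)$-th derivative of $e^{-\lambda x}$ is $(-\lambda)^{n-1-k}e^{-\lambda x}$. Summing over $k=0,\dots,n-1$ gives a closed expression for $L_\Theta^{(n)}(x)$. Multiplying by $(-1)^n$ and by the prefactor $x^{n-1}/\Gamma(n)$ in (\ref{mainformula}), the sign $(-1)^{n-1-k}$ from the exponential cancels against $(-1)^n$ up to a factor $(-1)^k$, the $\binom{n-1}{k}$ binomials combine with $1/\Gamma(n)$ into $1/(k!(n-1-k)!)$, and the powers consolidate into $\lambda(\lambda x)^{n+\alpha-k-2}$, delivering exactly (\ref{SumGammaClaims}).

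Once (\ref{SumGammaClaims}) is established, the mixture representation is purely cosmetic: multiply and divide the $k$-th summand by $\Gamma(n+\alpha-k-1)$ to recognize it as $w_k$ times the pdf of $\mathcal{G}a(\alpha_k,\lambda)$ with $\alpha_k=n+\alpha-k-1$ and $w_k$ as in (\ref{weightsGamma}). It is worth observing explicitly that since $\alpha\in(0,1)$ the factors $(\alpha-1)_k$ alternate in sign, so the $w_k$ are generally not all positive; this is a proper signed finite mixture whose weights sum to $1$ automatically (because $f_{S_n}$ integrates to $1$ and each component does too).

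The only real obstacle is sign bookkeeping through the Leibniz expansion (four different minus-sign sources multiply together) and the verification that the powers of $\lambda$ and $x$ coming from $x^{n-1}\cdot\lambda^k\cdot\lambda^{n-1-k}\cdot(\lambda x)^{\alpha-1-k}$ actually collapse into the compact form $\lambda(\lambda x)^{n+\alpha-k-2}$ claimed in the statement. Everything else — the Laplace transform, the Leibniz rule, and recognizing gamma densities — is routine and the proof is essentially a careful bookkeeping exercise starting from the master formula (\ref{mainformula}).
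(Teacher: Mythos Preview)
Your proposal is correct and follows essentially the same route as the paper: compute $L'_\Theta(x)$ from the derivative of the upper incomplete gamma function, then apply the Leibniz rule (\ref{LeibnizFormula}) to obtain $L_\Theta^{(n)}(x)$, and substitute into the master formula (\ref{mainformula}). The sign and power bookkeeping you outline matches the paper's computation line by line, and the mixture rewriting at the end is exactly the cosmetic step you describe.
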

\begin{proof}
Since $L_\Theta'(x)=-\frac{\lambda^\alpha}{\Gamma(\alpha)}x^{\alpha-1}e^{-\lambda x}$ and using the Leibniz's rule (see Appendix, equation (\ref{LeibnizFormula})) we have,
\begin{eqnarray*}
\frac{d^n}{dx^n}L_\Theta(x)&=&\frac{d^{n-1}}{dx^{n-1}}\left\{L_\Theta'(x)\right\}\\
&=&-\frac{\lambda^\alpha}{\Gamma(\alpha)}\frac{d^{n-1}}{dx^{n-1}}\left\{e^{-\lambda x}x^{\alpha-1}\right\}\\
&=&-\frac{\lambda^\alpha}{\Gamma(\alpha)}\sum_{k=0}^{n-1}{n-1\choose k}(-1)^{n-k-1}\lambda^{n-k-1}e^{-\lambda x}(\alpha-1)_kx^{\alpha-k-1},
\end{eqnarray*}
and using (\ref{mainformula}) we have,
$$
f_{S_n}(x)=\sum_{k=0}^{n-1}\frac{(-1)^{2n-k}(\alpha-1)_k}{\Gamma(\alpha)\Gamma(n)}{n-1\choose k}\lambda^{n+\alpha-k-1}x^{n+\alpha-k-2}e^{-\lambda x},
$$
which corresponds to formula (\ref{SumGammaClaims}).
\end{proof}\\

The moments of (\ref{SumGammaClaims}) can be obtained easily taking into account that it is a finite mixture of Gamma distributions. We have that,
\begin{equation*}
E(S_n^r)=\sum_{k=0}^{n-1}w_k\frac{\Gamma(\alpha_k+r)}{\lambda^r\Gamma(\alpha_k)},
\end{equation*}
where $r>0$, $\alpha_k$ are defined in (\ref{akGamma}) and $w_k$ in (\ref{weightsGamma}).

\subsection{Weibull $\frac{1}{2}$ Claims with Gumbel Copula Dependence}

This model corresponds to Weibull Claims with Gumbel copula dependence (Albrecher et al., 2011). We consider for $\Theta$ a $\frac{1}{2}$-stable distribution, also called L\'evy distribution, with probability density function (see Jewell, 1982),
\begin{equation}\label{PdfLevy}
f_\Theta(\theta)=\frac{\lambda}{2\sqrt{\pi \theta^3}}e^{-\lambda^2/4\theta},\;\;\theta\ge 0,
\end{equation}
and Laplace transform $L_\Theta(s)=e^{-\lambda\sqrt{s}}$, which corresponds to the function generator of the Archimedean copula $\phi(s)=(-\log(s))^2$, thus being a special case of the Gumbel Copula. For this model, the corresponding survival marginal function,
$$
\bar F_{X_i}(x)=\int_0^\infty e^{-\theta x}f_\Theta(\theta)d\theta=\exp(-\lambda \sqrt{x}),\;\;x\ge 0.
$$
The joint survival function is,
$$
\Pr(X_1>x_1,\dots,X_n>x_n)=\exp\left(-\lambda\sqrt{x_1+\dots+x_n}\right),\;\;x_1\ge 0,\dots,x_n\ge 0.
$$
The following Theorem states a simpler expression to the formulation provided by Dacarongna et al. (2015). To obtain this result, we use of the probability density function of the generalized inverse Gaussian (GIG) distribution introduced by Good (1953), with pdf,
\begin{equation}\label{InverseGaussianpdf}
f(x;a,b,p)=\frac{(a/b)^{p/2}}{2K_p(\sqrt{ab})}x^{p-1}\exp\left\{-\frac{1}{2}\left(ax+\frac{b}{x}\right)\right\},\;\;x\ge 0,
\end{equation}
where $-\infty<p<\infty$, $(a,b)\in\Theta_p$, where $\Theta_p=\{(a,b):\;a>0,\;b\ge 0\}$ if $p>0$, $\{(a,b):\;a>0,\;b>0\}$ if $p=0$ and $\{(a,b):\;a\ge 0,\;b>0\}$ if $p<0$. Here, $K_\nu(z)$ denotes the modified Bessel function of the third kind with index $\nu$ and argument $z$ (Watson, 1995). Special sub-models are the gamma distribution ($b=0$, $p>0$), the reciprocal gamma distribution ($a=0$, $p<0$), the inverse Gaussian distribution ($p=-1/2$) and the hyperbola distribution ($p=0$).

\begin{theorem}
Let consider the dependent risk model $(X_1,\dots,X_n)$, with marginal Weibull distributions with shape parameter $\alpha=\frac{1}{2}$ and scale parameter $c>0$ and Gumbel survival copula with dependent parameter $\theta=1/\alpha=2$. Then, the pdf of the aggregated risk $S_n$ is given by,
\begin{equation}\label{SumWeibull2}
f_{S_n}(x)=\frac{\lambda}{2^{2n-1}\Gamma(n)}\sum_{k=0}^{n-1}\frac{(2(n-1)-k)!}{(n-k-1)!k!}(2\lambda)^kx^{(k-1)/2}e^{-\lambda\sqrt{x}},\;\;x\ge 0,
\end{equation}
and $f_{S_n}(x)=0$ if $x<0$.
\end{theorem}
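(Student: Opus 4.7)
The plan is to apply the main identity (\ref{mainformula}) from Theorem \ref{Theoremmain} to the Laplace transform $L_{\Theta}(x)=e^{-\lambda\sqrt{x}}$. Attempting to differentiate $e^{-\lambda\sqrt{x}}$ directly $n$ times through Fa\`a di Bruno would produce an unwieldy combinatorial sum, because $\sqrt{x}$ has nontrivial derivatives of every order. I would instead express the signed $n$th derivative as the moment integral
$$(-1)^n\frac{d^n}{dx^n}L_{\Theta}(x)=\int_0^\infty \theta^n e^{-\theta x}f_{\Theta}(\theta)\,d\theta,$$
inserting the L\'evy density (\ref{PdfLevy}) for $f_{\Theta}$.

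Once the L\'evy density is substituted, the powers of $\theta$ combine to the exponent $n-3/2$, leaving
$$(-1)^n L_{\Theta}^{(n)}(x)=\frac{\lambda}{2\sqrt{\pi}}\int_0^\infty \theta^{n-3/2}\exp\!\left\{-\theta x-\frac{\lambda^2}{4\theta}\right\}d\theta,$$
which is, up to its normalizing constant, the integral of a generalized inverse Gaussian density (\ref{InverseGaussianpdf}) with parameters $p=n-1/2$, $a=2x$, $b=\lambda^2/2$. Using the GIG normalization identity then evaluates this integral in closed form as a modified Bessel function of the third kind of half-integer order, yielding an intermediate compact expression for $f_{S_n}(x)$ proportional to $x^{n/2-3/4}K_{n-1/2}(\lambda\sqrt{x})$.

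The final step exploits the classical finite-sum representation of half-integer Bessel functions,
$$K_{n-1/2}(z)=\sqrt{\frac{\pi}{2z}}\,e^{-z}\sum_{k=0}^{n-1}\frac{(n-1+k)!}{k!(n-1-k)!(2z)^k},$$
with $z=\lambda\sqrt{x}$, which converts the Bessel form into a polynomial in $x^{-1/2}$ multiplying $e^{-\lambda\sqrt{x}}$. After gathering the powers of $2$, $\lambda$, and $x$, and reindexing via $k\mapsto n-1-k$, the coefficient $(n-1+k)!/[k!(n-1-k)!]$ becomes $(2(n-1)-k)!/[(n-k-1)!\,k!]$, reproducing exactly the series in (\ref{SumWeibull2}). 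The only real obstacle is this bookkeeping of constants and exponents through the chain of substitutions; the analytic content lies entirely in the two classical identities (the GIG normalization and the half-integer Bessel expansion), both of which are applicable unconditionally for $x>0$ and $n\ge 1$.
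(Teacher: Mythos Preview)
Your proposal is correct and follows essentially the same route as the paper: substitute the L\'evy density into the integral representation, recognize the resulting $\int_0^\infty \theta^{n-3/2}e^{-x\theta-\lambda^2/4\theta}\,d\theta$ as a GIG normalizing constant to obtain the Bessel form $x^{n/2-3/4}K_{n-1/2}(\lambda\sqrt{x})$, and then expand via the finite half-integer Bessel sum. The only cosmetic difference is that the paper quotes the Gradshteyn--Ryzhik version of the expansion for $K_{n+1/2}$ (with coefficients $(2n-k)!$) and shifts $n\to n-1$, whereas you use the equivalent form with $(n-1+k)!$ and then reindex; after that reindex the two coincide.
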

\begin{proof}
To prove this result, we use the second method described in Section \ref{Sectionbasicresult}. Using (\ref{IntegralSn})and (\ref{PdfLevy}) we have,
\begin{eqnarray*}
f_{S_n}(x)&=&\frac{x^{n-1}}{\Gamma(n)}\int_0^\infty t^{n}e^{-xt}f_\Theta(t)dt\\
&=&\frac{\lambda x^{n-1}}{2\sqrt{\pi}\Gamma(n)}\int_0^\infty t^{n-3/2}e^{-xt-\lambda^2/4t}dt\\
&=&\frac{\lambda x^{n-1}}{2\sqrt{\pi}\Gamma(n)}\frac{2K_{n-1/2}(\lambda\sqrt{x})}{(4x/\lambda^2)^{(1/2)(n-1/2)}}\\
&=&\frac{\lambda^{n+1/2}}{2^{n-1/2}\sqrt{\pi}\Gamma(n)}x^{n/2-3/4}K_{n-1/2}(\lambda\sqrt{x}),
\end{eqnarray*}
where we have s GIG distribution (\ref{InverseGaussianpdf}) with parameters $a=2t$, $b=\lambda^2/2$ and $p=n-1/2$. Now, we use the following result by Gradshteyn and Ryzkiz (1980),
$$
K_{n+\frac{1}{2}}(x)=\frac{\sqrt{\pi}}{(2x)^{n+1/2}}e^{-x}\sum_{k=0}^n\frac{(2n-k)!}{(n-k)!k!}(2x)^k,
$$
and after some computation we obtain (\ref{SumWeibull2}).
\end{proof}

Formula (\ref{SumWeibull2}) can also be written in terms of the partial Bell polynomials (see Appendix). We write $L_\Theta(x)=f(g(x))=e^{-\lambda\sqrt{x}}$, with $f(x)=e^{-\lambda x}$, $g(x)=\sqrt{x}$. Since $f^{(n)}(x)=(-1)^n\lambda^ne^{-\lambda x}$ and
$$
g^{(n)}(x)=a_nx^{1/2-n},
$$
where
\begin{equation}\label{ancoefficients}
a_n=\frac{(-1)^{n-1}(2n-2)!}{2^{2n-1}(n-1)!},\;\;n=1,2,\dots
\end{equation}
we have,
\begin{equation*}
f_{S_n}(x)=\frac{x^{n-1}}{\Gamma(n)}\sum_{k=1}^n(-1)^{n-k}\lambda^ke^{-\lambda\sqrt{x}}B_{n,k}\left\{a_jx^{1/2-j},\;\;1\le j\le n-k+1\right\},
\end{equation*}
with $x\ge 0$, being $B_{n,k}(x_1,\dots,x_{n-k+1})$ are the partial Bell polynomials defined in the Appendix.

Note that the density (\ref{SumWeibull2}) is a finite mixture of densities of the form $f(x)\propto x^{\alpha-1}e^{-\lambda\sqrt{x}}$. We introduce the following definition.
\begin{definition}\label{DefinitionSquareGamma}
A random variable $X$ is said to have a square gamma distribution if its pdf is of the form,
\begin{equation}\label{pdfSquareGamma}
f(x)=\frac{\lambda^{2\alpha}x^{\alpha-1}e^{-\lambda\sqrt{x}}}{2\Gamma(2\alpha)},\;\;x\ge 0,
\end{equation}
and $f(x)=0$ is $x<0$, with $\alpha,\lambda>0$.
\end{definition}
A random variable with pdf (\ref{pdfSquareGamma}) will be represented as $X\sim {\cal SG}a(\alpha,\lambda)$. Note that $\sqrt{X}\sim {\cal G}a(2\alpha,r)$, that is, the square root of $X$ is a classical gamma distribution. The raw moments are $E(X^r)=\frac{\Gamma(2(\alpha+r))}{\lambda^{2r}\Gamma(2\alpha)}$, with $r>0$. The mean and variance of $\sqrt{X}$ are $E(\sqrt{X})=\frac{2\alpha}{\lambda}$ and $var(\sqrt{X})=\frac{2\alpha}{\lambda^2}$ respectively.

Using Definition \ref{DefinitionSquareGamma}, the density of $S_n$ can be written as,
$$
f_{S_n}(x)=\sum_{k=0}^{n-1}\frac{(2n-k-2)!}{(n-k-1)!\Gamma(n)2^{2n-k-2}}\frac{\lambda^{k+1}x^{(k+1)/2-1}e^{-\lambda\sqrt{x}}}{2\Gamma(k+1)},
$$
which is a finite mixture of square gamma distributions with components $X_k\sim {\cal SG}a(\frac{k+1}{2},\lambda)$, $k=1,2,\dots,n$,
$$
f_{S_n}(x)=\sum_{k=0}^{n-1}w_kf_{{\cal SG}a((k+1)/2,\lambda)}(x),
$$
and weights,
$$
w_k=\frac{(2n-k-2)!}{(n-k-1)!\Gamma(n)2^{2n-k-2}},\;\;k=0,1,\dots,n-1.
$$
The moments of the sum $S_n$ can be obtained in a simple form as,
$$
E(S_n^r)=\sum_{k=0}^{n-1}w_kE\left(X_k^r\right)=\sum_{k=0}^{n-1}w_k\frac{\Gamma(k+1+2r)}{\lambda^{k+1}\Gamma(k+1)}.
$$

\subsection{General Weibull Claims with Gumbel Copula Dependence}

We introduce in this section a new model for the aggregated risks, which is an extension of the previous one. Let consider a positive stable random variable with pdf (see Feller, 1971),
$$
f_\Theta(x)=-\frac{1}{\pi x}\sum_{k=1}^\infty \frac{\Gamma(k\alpha+1)}{k!}(-x^{-\alpha})^k\sin(\alpha k\pi),
$$
and Laplace transform,
\begin{equation}\label{laplacestable}
L_\Theta(s)=\exp(-s^\alpha),\;\;s\ge 0,
\end{equation}
and $\alpha\in(0,1]$. Using (\ref{laplacestable}), the joint survival function is,
$$
\Pr(X_1>x_1,\dots,X_n>x_n)=\exp\{-(x_1+\dots+x_n)^\alpha\},
$$
with marginal distributions,
$$
\bar F_{X_i}(x)=\exp\{-x^\alpha\},\;\;x\ge 0,\;\;i=1,2,\dots,n,
$$
which are Weibull distributions with shape parameter $\alpha\in(0,1]$. Since the Laplace transform of $\Theta$ is (\ref{laplacestable}), the generator of the Archimedean copula is $\phi(t)=L_\Theta^{-1}(t)=(-\log t)^{1/\alpha}$ and the survival copula is,
\begin{equation}\label{cgumbel}
\bar C_\alpha(u_1,\dots,u_n)=\exp\left\{-\left[\sum_{k=1}^n(-\log u_k)^{1/\alpha}\right]^\alpha\right\},
\end{equation}
which corresponds to the Gumbel copula (the family B6 in the Joe's (1997) notation). Note that (\ref{cgumbel}) includes the $C_U$ upper bound for $\alpha\to 0$ and the $C_I$ bound for $\alpha=1$.

\noindent For the distribution of the sum, we have the following Theorem.
\begin{theorem}
Let consider the model (\ref{Model_11})-(\ref{Model_12}), where the marginal claims are Weibull($\alpha$) with $0<\alpha\le 1$ and the copula dependence is Gumbel. Then, the distribution of $S_n$ is,
\begin{equation}\label{sumWeibull}
\displaystyle f_{S_n}(x;\alpha)=\frac{x^{n-1}}{\Gamma(n)}\sum_{k=1}^n(-1)^{n+k}e^{-x^\alpha}B_{n,k}\left((\alpha)_1x^{\alpha-1},\dots,(\alpha)_{n-k+1}x^{\alpha-(n-k+1)}\right),
\end{equation}
where $\alpha\in (0,1]$, $B_{n,k}(x_1,\dots,x_{n-k+1})$ are the partial Bell polynomials and $(\alpha)_n$ is the Pochhammer symbol.
\end{theorem}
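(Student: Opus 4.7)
The plan is to apply the master formula (\ref{mainformula}) from Theorem \ref{Theoremmain}, which reduces the problem to computing the $n$th derivative of the Laplace transform $L_\Theta(x)=e^{-x^\alpha}$. Since this is a composition of the exponential with $g(x)=x^\alpha$, the natural tool is Fa\`a di Bruno's formula expressed through partial Bell polynomials, which is already the form in which the final answer is stated, so the match should be direct.

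First I would write $L_\Theta(x)=f(g(x))$ with $f(u)=e^{-u}$ and $g(x)=x^\alpha$, and record the two ingredients needed by Fa\`a di Bruno:
\begin{equation*}
f^{(k)}(u)=(-1)^k e^{-u},\qquad g^{(j)}(x)=\alpha(\alpha-1)\cdots(\alpha-j+1)\,x^{\alpha-j}=(\alpha)_j\,x^{\alpha-j},
\end{equation*}
where $(\alpha)_j$ is the falling-factorial Pochhammer symbol used in the paper. Plugging these into Fa\`a di Bruno's identity
\begin{equation*}
\frac{d^n}{dx^n}f(g(x))=\sum_{k=1}^n f^{(k)}(g(x))\,B_{n,k}\!\left(g'(x),g''(x),\dots,g^{(n-k+1)}(x)\right)
\end{equation*}
yields
\begin{equation*}
\frac{d^n}{dx^n}e^{-x^\alpha}=\sum_{k=1}^n (-1)^k e^{-x^\alpha}\,B_{n,k}\!\left((\alpha)_1 x^{\alpha-1},\dots,(\alpha)_{n-k+1}x^{\alpha-(n-k+1)}\right).
\end{equation*}

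Second, I would substitute this expression into (\ref{mainformula}). The global factor $(-1)^n$ coming from the master formula combines with the $(-1)^k$ inside the sum to produce $(-1)^{n+k}$, and the prefactor $x^{n-1}/\Gamma(n)$ is kept as is, delivering precisely (\ref{sumWeibull}). To close the argument I would note that the hypothesis $0<\alpha\le 1$ guarantees that $\Theta$ is an honest positive $\alpha$-stable variable, so that $L_\Theta(s)=\exp(-s^\alpha)$ is a legitimate Laplace transform of a cdf on $(0,\infty)$ and the derivatives alternate in sign appropriately; consequently the right-hand side of (\ref{sumWeibull}) is a bona fide density, consistent with the complete monotonicity argument used in Theorem 1.

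I do not expect any serious obstacle: the only point requiring care is the bookkeeping of signs and indices in Fa\`a di Bruno (ensuring that the arguments of $B_{n,k}$ run over $j=1,\dots,n-k+1$ and that the Pochhammer symbol matches the paper's convention). An alternative but essentially equivalent route would be to write $e^{-x^\alpha}=\sum_{m\ge 0}(-1)^m x^{m\alpha}/m!$, differentiate term by term, and then reorganize the resulting double series into Bell polynomials; this would reproduce the same formula but be notationally heavier, so the Fa\`a di Bruno path is preferable.
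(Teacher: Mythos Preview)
Your proposal is correct and follows essentially the same argument as the paper: write $L_\Theta(x)=f(g(x))$ with $f(u)=e^{-u}$ and $g(x)=x^\alpha$, apply Fa\`a di Bruno's formula with $f^{(k)}(u)=(-1)^k e^{-u}$ and $g^{(j)}(x)=(\alpha)_j x^{\alpha-j}$, and then invoke Theorem~\ref{Theoremmain}. The only addition you make is the brief remark on why $0<\alpha\le 1$ ensures $L_\Theta$ is a genuine Laplace transform, which the paper leaves implicit.
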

\begin{proof}
We write $L_\Theta(x)=f(g(x))=e^{-t^\alpha}$, where $f(x)=e^{-x}$ and $g(x)=x^\alpha$. Then,
$$
f^{(n)}(x)=(-1)^ne^{-x},
$$
and
$$
g^{(n)}(x)=\alpha(\alpha-1)\cdots(\alpha-n+1)x^{\alpha-n}=(\alpha)_nx^{\alpha-n},
$$
so, using the Fa\`a di Bruno formula (\ref{FormulaFaadiBruno}) and Theorem \ref{Theoremmain}, we get the result.
\end{proof}\\

The pdf (\ref{sumWeibull}) is a finite mixture of generalized gamma distributions $GG(\alpha,\eta)$, whose pdf is (see McDonald, 1984),
$$
f(x;\alpha,\eta)=\frac{\alpha x^{\alpha\eta-1}e^{-x^\alpha}}{\Gamma(\alpha)},\;\;x\ge 0.
$$
For $n=2$, formula (\ref{sumWeibull}) becomes in,
$$
f_{S_2}(x;\alpha)=(1-\alpha)\alpha x^{\alpha-1}e^{-x^\alpha}+\alpha^2x^{2\alpha-1}e^{-x^\alpha},\;\;x\ge 0,
$$
which is a finite mixture of two generalized gamma $GG(\alpha,1)$ and $GG(\alpha,2)$, with weights $1-\alpha$ and $\alpha$ respectively. For $n=3$ we have,
$$
f_{S_3}(x;\alpha)=\frac{\alpha(1-\alpha)(2-\alpha)}{2} x^{\alpha-1}e^{-x^\alpha}+\frac{3\alpha^2(1-\alpha)}{2}x^{2\alpha-1}e^{-x^\alpha}+\frac{\alpha^3}{2}x^{3\alpha-1}e^{-x^\alpha},
$$
with $x\ge 0$, which is again a mixture of the three generalized gamma distributions $GG(\alpha,j)$, with $j=1,2,3$ with weights $\frac{(1-\alpha)(2-\alpha)}{2}$, $\frac{3\alpha(1-\alpha)}{2}$ and $\alpha^2$.\\

The following Lemma provides the Kendall's tau coefficient. 
\begin{lemma}
The tau Kendall's tau coefficient between pairs of random variables is,
$$
\tau_{ij}(X_i,X_j)=1-\alpha.
$$
\end{lemma}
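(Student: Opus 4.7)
The plan is to apply the Archimedean-generator formula (\ref{taukendall}) from the earlier subsection on dependence measures. From the Laplace transform $L_\Theta(s) = \exp(-s^\alpha)$ identified in (\ref{laplacestable}), the survival copula generator is $\phi(t) = (-\log t)^{1/\alpha}$, so the entire calculation reduces to evaluating the ratio $\phi(t)/\phi'(t)$ and integrating it on $(0,1)$.

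First, I would differentiate by the chain rule to obtain $\phi'(t) = -\frac{1}{\alpha t}(-\log t)^{1/\alpha - 1}$. The exponents of $-\log t$ in numerator and denominator then differ by exactly one, so the ratio collapses to the simple form
\[
\frac{\phi(t)}{\phi'(t)} = \alpha\, t \log t,
\]
with the two minus signs (one from $\phi'$ and one from $-\log t$) canceling out. A brief check confirms that $\lim_{t\to 0^+}\phi(t)/\phi'(t) = 0$, so the hypothesis underlying formula (\ref{taukendall}) is met, and the integral is well defined at the left endpoint despite $\phi(0) = \infty$.

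It then remains to compute $\int_0^1 t \log t\, dt$, which by a one-line integration by parts (with $u = \log t$, $dv = t\, dt$) equals $-1/4$. Substituting back gives $\tau_{ij} = 1 + 4\cdot \alpha \cdot (-1/4) = 1 - \alpha$, matching the claim and also recovering the known Kendall's tau of the Gumbel copula with parameter $\theta = 1/\alpha$. There is no substantive obstacle: the only care needed is to track signs, since $\log t$ is negative on $(0,1)$ while $-\log t$ appears to fractional powers, and to verify the boundary condition so that the generator formula legitimately applies.
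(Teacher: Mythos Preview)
Your proposal is correct and follows exactly the same approach as the paper: both verify the condition $\phi(0)/\phi'(0)=0$ and then invoke the generator formula (\ref{taukendall}). The paper's proof is a one-liner that omits the explicit computation of $\phi'/\phi$ and the integral $\int_0^1 t\log t\,dt=-1/4$, whereas you spell these steps out in full, but there is no methodological difference.
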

\begin{proof}
Since $\frac{\phi(0)}{\phi'(0)}=0$, using Equation (\ref{taukendall}), we obtain the result.
\end{proof}

\subsection{Inverse Gaussian Mixtures of exponential Claims}

If $\Theta\sim {\cal IG}(\lambda,\mu)$ has an inverse Gaussian distribution with parameters $\mu>0$ and $\lambda>0$ and pdf,
$$
f_\Theta(x)=\sqrt{\frac{\lambda}{2\pi}}x^{-3/2}\exp\left(-\frac{\lambda(x-\mu)^2}{2\mu^2x}\right),\;\;x>0,
$$
the corresponding mixing distribution for the marginal claim size $X_j$ is,
\begin{equation}\label{SurvivalInverseGaussian}
\Pr(X_j>x)=\int_0^\infty e^{-\theta x}f_\Theta(\theta)d\theta=\exp\left\{-\frac{\lambda}{\mu}\left(\sqrt{1+\frac{2\mu^2x}{\lambda}}-1\right)\right\},\;\;x\ge 0,
\end{equation}
for $j=1,2,\dots,n$ and joint survival function
\begin{equation}\label{JointSurvivalIG}
\Pr(X_1>x_1,\dots,X_n>x_n)=\exp\left\{-\frac{\lambda}{\mu}\left(\sqrt{1+\frac{2\mu^2}{\lambda}\sum_{j=1}^nx_j}-1\right)\right\},
\end{equation}
if $x_1,\dots,x_n\ge 0$. This model (with a different parameterization) was introduced by Whitmore (1988) and extended to the multivariate case by Whitmore and Lee (1991).
Since the generator is,
$$
\phi(t)=\frac{\lambda}{2\mu^2}\left\{\left(1-\frac{\mu}{\lambda}\log t\right)^2-1\right\},
$$
the survival copula associated is,
\begin{equation}\label{copulamixtureIG}
\bar C(u_1,\dots,u_n)=\exp\left\{-\frac{\lambda}{\mu}\left[\left(\sum_{j=1}^n\left(1-\frac{\mu}{\lambda}\log u_j\right)^2-n+1\right)^{1/2}-1\right]\right\},
\end{equation}
with $0\le u_j\le 1$, $j=1,2,\dots,n$.

To obtaine the pdf of the sum, we define
\begin{equation*}
a(x)=\sqrt{1+\frac{2\mu^2x}{\lambda}}-1,
\end{equation*}
and
\begin{equation*}
b(x)=\frac{\lambda}{\mu}a(x),
\end{equation*}
We have the following Theorem.
\begin{theorem}
Let consider the model (\ref{Model_11})-(\ref{Model_12}), where the marginal claims are defined in (\ref{SurvivalInverseGaussian}) and the copula dependence is (\ref{copulamixtureIG}). Then, the distribution of $S_n$ is,
\begin{equation*}
f_{S_n}(x)=\frac{x^{n-1}}{\Gamma(n)}\sum_{k=1}^n(-1)^{n+k}\left(\frac{\lambda}{\mu}\right)^ke^{-b(x)}u_{n,k}(x),
\end{equation*}
if $x\ge 0$, $f_{S_n}(x)=0$ if $x<0$ where
\begin{equation*}
u_{n,k}(x)=B_{n,k}\left\{a_jb^j(1+bx)^{1/2-j},\;\;1\le j\le n-k+1 \right\},
\end{equation*}
with $b=\frac{2\mu^2}{\lambda}$, $B_{n,k}(x_1,\dots,x_{n-k+1})$ are the partial Bell polynomials and $a_n$ are defined in (\ref{ancoefficients}).
\end{theorem}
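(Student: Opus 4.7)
The plan is to invoke Theorem \ref{Theoremmain}, which reduces the problem to computing the $n$th derivative of the Laplace transform $L_\Theta(x) = e^{-b(x)}$, where $b(x) = \frac{\lambda}{\mu}(\sqrt{1+bx}-1)$ with the constant $b = 2\mu^2/\lambda$ (abusing the notation of the statement). Since $L_\Theta$ is the composition of the exponential with the shifted square root, I would apply the Fa\`a di Bruno formula, writing $L_\Theta(x) = f(g(x))$ with $f(u) = e^{-u}$ and $g(x) = b(x)$.

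First, I would record the two sets of derivatives needed. For the outer function, $f^{(k)}(u) = (-1)^k e^{-u}$. For the inner function, I would observe that $g(x) = \frac{\lambda}{\mu}\sqrt{1+bx} - \frac{\lambda}{\mu}$, and applying the chain rule to the square root gives, for $j \ge 1$,
\begin{equation*}
g^{(j)}(x) = \frac{\lambda}{\mu}\, a_j\, b^j\, (1+bx)^{1/2-j},
\end{equation*}
where the $a_j$ are precisely the coefficients recorded in (\ref{ancoefficients}) that encode the successive derivatives of $y \mapsto \sqrt{y}$.

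Second, I would plug these into Fa\`a di Bruno (formula (\ref{FormulaFaadiBruno}) in the appendix) to obtain
\begin{equation*}
\frac{d^n}{dx^n} L_\Theta(x) = \sum_{k=1}^n (-1)^k e^{-b(x)}\, B_{n,k}\!\left(g'(x), g''(x), \dots, g^{(n-k+1)}(x)\right).
\end{equation*}
The key simplification is then the homogeneity of the partial Bell polynomials: $B_{n,k}(c x_1, c x_2, \dots, c x_{n-k+1}) = c^k B_{n,k}(x_1, \dots, x_{n-k+1})$. Using this with $c = \lambda/\mu$ lets me pull the factor $(\lambda/\mu)^k$ out of each $B_{n,k}$, leaving precisely the quantity $u_{n,k}(x)$ inside. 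Substituting into Theorem \ref{Theoremmain} yields the overall sign $(-1)^n \cdot (-1)^k = (-1)^{n+k}$ and matches the stated formula.

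The routine parts are the Fa\`a di Bruno bookkeeping and the sign arithmetic; I expect no substantive obstacle since the heavy lifting is already encoded in the $a_j$ coefficients established in the Weibull-$\tfrac12$ section and in the homogeneity identity for Bell polynomials. The only mild pitfall is notational: the symbol $b$ is used both as the function $b(\cdot)$ in the exponent and as the constant $2\mu^2/\lambda$ appearing inside the square root and in $u_{n,k}$, so I would be careful to distinguish $e^{-b(x)}$ (the composed exponent) from the $b^j$ factors arising from differentiating $(1+bx)^{1/2}$.
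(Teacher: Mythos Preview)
Your proposal is correct and follows essentially the same approach as the paper: apply Theorem~\ref{Theoremmain} and compute $L_\Theta^{(n)}$ via Fa\`a di Bruno. The only cosmetic difference is the split of the composition: the paper takes $f(t)=e^{-(\lambda/\mu)t}$ and $g(x)=(1+bx)^{1/2}-1$, so the factor $(\lambda/\mu)^k$ appears directly from $f^{(k)}$, whereas you take $f(u)=e^{-u}$ and $g(x)=b(x)$ and recover $(\lambda/\mu)^k$ via the homogeneity $B_{n,k}(cx_1,\dots,cx_{n-k+1})=c^kB_{n,k}(x_1,\dots,x_{n-k+1})$; both routes yield the stated formula.
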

\begin{proof}
We write $L_\Theta(t)=f(g(t))$, where $f(t)=e^{-\frac{\lambda}{\mu}t}$ and $g(x)=(1+b(x))^{1/2}-1$. We have $f^{(n)}(t)=(-1)^n(\frac{\lambda}{\mu})^ne^{-\frac{\lambda}{\mu}t}$ and
$$g^{(n)}(t)=a_nb^n(1+bx)^{1/2-n},$$
where $b=2\mu^2/\lambda$ and $a_n$ is defined in (\ref{ancoefficients}). Then, using the Fa\`a di Bruno formula (\ref{FormulaFaadiBruno}), we obtain the result.
\end{proof}

We have the following formulas for $n=2,3$ and $4$,
$$
f_{S_2}(x)=\frac{\mu^3xe^{-b(x)}}{\lambda(a(x)+1)^3}+\frac{\mu^2xe^{-b(x)}}{(a(x)+1)^2},\;\;x\ge 0,
$$
$$
f_{S_3}(x)=\frac{3\mu^5x^2e^{-b(x)}}{2\lambda^2(a(x)+1)^5}+
\frac{3\mu^4x^2e^{-b(x)}}{2\lambda(a(x)+1)^4}+
\frac{\mu^3x^2e^{-b(x)}}{2(a(x)+1)^3},\;\;x\ge 0,
$$
and
$$
f_{S_4}(x)=\frac{15\mu^7x^3e^{-b(x)}}{6\lambda^3(a(x)+1)^7}+
\frac{15\mu^6x^3e^{-b(x)}}{6\lambda^2(a(x)+1)^6}+
\frac{6\mu^5x^3e^{-b(x)}}{6\lambda(a(x)+1)^5}+
\frac{\mu^4x^3e^{-b(x)}}{(a(x)+1)^4},\;\;x\ge 0.
$$

The moments of $S_n$ can be obtained in this way. If $\Theta\sim {\cal IG}(\lambda,\mu)$ is an inverse Gaussian distribution, the positive moments are (see Johnson et al., 1994),
\begin{equation}\label{MomentsIG}
E(\Theta^r)=\mu^r\sum_{s=0}^{r-1}\frac{(r-1+s)!}{s!(r-1-s)!}\left(2\frac{\lambda}{\mu}\right)^{-s},
\end{equation}
and the negative moments
\begin{equation}\label{MomentsInverseIG}
E(\Theta^{-r})=\frac{E(\Theta^{r+1})}{\mu^{2r+1}},\;\;r=1,2,\dots
\end{equation}
Using previous formulas and Equations (\ref{MeanSn}) and (\ref{VarianceSn}) the mean and variance of $S_n$ are
\begin{eqnarray*}
E(S_n)&=&n\left(\frac{1}{\lambda}+\frac{1}{\mu}\right),\\
var(S_n)&=& n\left(\frac{1}{\mu^2}+\frac{3}{\lambda\mu}+\frac{3}{\lambda^2}\right)+n^2\left(\frac{1}{\lambda\mu}+\frac{2}{\lambda^2}\right).
\end{eqnarray*}
The mean and variance of the marginal claims with cdf (\ref{SurvivalInverseGaussian}) are,
$$
E(X_i)=\frac{1}{\lambda}+\frac{1}{\mu},\;\;i=1,2,\dots,n,
$$
and
$$
var(X_i)=\frac{\lambda^2+4\lambda\mu+5\mu^2}{\lambda^2\mu^2},\;\;i=1,2,\dots,n.
$$
For the multivariate distribution $(X_1,\dots,X_n)$ with joint survival function (\ref{JointSurvivalIG}), the linear correlation coefficient between pairs $(X_i,X_j)$ is (using (\ref{linealcorrelation}) and Equations (\ref{MomentsIG}) and (\ref{MomentsInverseIG})),
$$
\rho_{ij}=\rho(X_i,X_j)=\frac{\mu(\lambda+2\mu)}{\lambda^2+4\lambda\mu+5\mu^2},\;\;i\neq j.
$$
The Kendall's tau coefficient is given in the following Lemma.
\begin{lemma}
For the multivariate distribution $(X_1,\dots,X_n)$ with joint survival function (\ref{JointSurvivalIG}), the Kendall's tau coefficient for each pair $(X_i,X_j)$ is,
$$
\tau_{ij}(X_i,X_j)=1-\frac{a(2+a)-4e^{2/a}\Gamma(0,2/a)}{2a^2},\;i\neq j,
$$
where $\Gamma(0,z)$ is the incomplete gamma function and $a=\frac{\mu}{\lambda}$.
\end{lemma}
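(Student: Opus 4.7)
The plan is to apply the Genest--MacKay formula stated in equation (\ref{taukendall}), namely $\tau_{ij}=1+4\int_0^1\phi(t)/\phi'(t)\,dt$, to the Archimedean generator of the inverse Gaussian mixture model, which can be read off from the joint survival function (\ref{JointSurvivalIG}) via $\phi(t)=L_\Theta^{-1}(t)=\frac{\lambda}{2\mu^2}\{(1-\frac{\mu}{\lambda}\log t)^2-1\}$. First I would check that $\phi(0)/\phi'(0)=0$ (the leading log-singularity in $\phi$ is quadratic in $\log t$ while the singularity in $\phi'$ is of order $\log t/t$, so the ratio vanishes at $0$), which legitimizes the use of (\ref{taukendall}).

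Next I would compute $\phi'(t)=-\frac{1}{\mu t}(1-\frac{\mu}{\lambda}\log t)$ and form the ratio, writing $a=\mu/\lambda$ and $u=1-a\log t$ to obtain the clean factorization
\[
\frac{\phi(t)}{\phi'(t)}=-\frac{t}{2a}\cdot\frac{u^2-1}{u}=\frac{t\log t\,(2-a\log t)}{2(1-a\log t)}.
\]
Substituting $s=-\log t$ (so $dt=-e^{-s}ds$ and the interval $(0,1)$ becomes $(\infty,0)$) turns the Kendall integral into
\[
\tau_{ij}=1-2\int_0^\infty \frac{s\,e^{-2s}(2+as)}{1+as}\,ds,
\]
which is the key intermediate identity.

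Then I would split the integrand by the algebraic identity $\frac{s(2+as)}{1+as}=s+\frac{1}{a}-\frac{1}{a(1+as)}$, giving three elementary integrals. The first two are $\int_0^\infty s\,e^{-2s}ds=1/4$ and $\frac{1}{a}\int_0^\infty e^{-2s}ds=1/(2a)$. For the remaining piece $\frac{1}{a}\int_0^\infty \frac{e^{-2s}}{1+as}ds$, the substitution $v=2(1+as)/a$ shifts the contour to $[2/a,\infty)$ and produces the upper incomplete gamma function $\Gamma(0,2/a)=\int_{2/a}^\infty v^{-1}e^{-v}dv$, yielding $\frac{e^{2/a}}{a^2}\Gamma(0,2/a)$. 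Collecting and simplifying $1-2\bigl(\tfrac14+\tfrac{1}{2a}-\tfrac{e^{2/a}}{a^2}\Gamma(0,2/a)\bigr)$ gives precisely $1-\frac{a(2+a)-4e^{2/a}\Gamma(0,2/a)}{2a^2}$.

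The only nontrivial step is recognizing that the surviving integral $\int_0^\infty e^{-2s}/(1+as)\,ds$ is an exponential integral in disguise and matching the resulting expression to the $\Gamma(0,\cdot)$ form used in the statement; everything else is routine simplification of the Archimedean generator and algebraic splitting of a rational function in $s$.
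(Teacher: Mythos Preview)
Your proposal is correct and follows essentially the same route as the paper: verify $\phi(0)/\phi'(0)=0$, apply the Genest--MacKay formula (\ref{taukendall}) to the generator $\phi(t)=\frac{\lambda}{2\mu^2}\{(1-a\log t)^2-1\}$, and make the logarithmic substitution $s=-\log t$ to reduce the problem to an integral over $(0,\infty)$. The paper's own proof stops at the change of variable and simply asserts ``we get the result''; your partial-fraction splitting $\frac{s(2+as)}{1+as}=s+\frac{1}{a}-\frac{1}{a(1+as)}$ and the subsequent identification of the last piece with $e^{2/a}\Gamma(0,2/a)/a^2$ supply exactly the missing computation, and your arithmetic checks out against the stated formula.
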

\begin{proof}
Since $\frac{\phi(0)}{\phi'(0)}=0$, using (\ref{taukendall}) and calling $a=\frac{\mu}{\lambda}$,
$$
\int_0^1\frac{\phi(t)}{\phi'(t)}dt=-\int_0^1\frac{[(1-a\log t)-1]^2-1}{2a(1-a\log t)}dt=-\int_0^\infty \frac{[(1+ax)-1]e^{-2x}}{2a(1+ax)}dx,
$$
and we get the result, after making the change of variable $\log t=x$.
\end{proof}

\subsection{Computation of risk measures and ruin formulas}

Here we discuss briefly the computation of risk measures for the aggregated distribution. For the model with Pareto claims and Clayton copula dependence, expressions for the VaR, TVar and tail moments have been obtained by Sarabia et al. (2016). On the other hand, some of the aggregated distributions obtained in previous sections can be written as finite mixtures of distributions. Then, to compute the TVaR and the tail moments, we include in the Appendix a general result for computing these measures in finite mixtures, in terms of these measures for the components of the mixture.

Explicit ruin formulas for this kind of models with dependence risks have been provided by Albrecher et al. (2011). They obtained explicit mixture of ruin functions by using the gamma and Weibull (the pdf given in (\ref{PdfLevy})) as mixing distributions. Unfortunately, we can not get additional mixture ruin functions for the others mixing distributions considered here. Notwithstanding, we can obtain a closed expression for the mixture ruin function for model (\ref{Model_11}-\ref{Model_12}) and assuming that $\Theta$ follows a Lindley distribution with parameter $\lambda>0$. The continuous Lindley distribution
(Lindley, 1958), which depends only on one parameter, has not been frequently used in the statistical science although its treatment
has been demonstrated to be useful. The pdf of the Lindley distribution with parameter $\lambda>0$ is given by
\begin{eqnarray*}
f(x)=\frac{\lambda^2}{1+\lambda}(1+x)\exp(-\lambda x),\quad x>0,
\end{eqnarray*}
which is a mixture of an exponential distribution with a gamma distribution. The use of this simple distribution has been demonstrated to be useful in actuarial statistical in some recent works. See, for example G\'omez--D\'eniz et al. (2012) and  Asgharzadeh et al. (2017).

Now, under the model (\ref{Model_11}-\ref{Model_12}) and using Theorem \ref{Theoremmain}, we get the probability function of $S_n$, which results in
\begin{eqnarray}
f_{S_n}(x)=\frac{n\lambda^2}{1+\lambda}\frac{x^{n-1}(x+\lambda+n+1)}{(x+\lambda)^{n+2}},\quad x>0.\label{cml}
\end{eqnarray}

Using expression (4) in Albrecher et al. (2011) and denoting by $\psi_{\theta}(u)$ the ruin probability of the classical compound Poisson (with parameter $\phi>0$) we get the mixture of the ruin function when $\Theta$ follows the Lindley distribution with parameter $\lambda>0$. This results
\begin{eqnarray*}
\psi(u) &=& 1-\frac{1+\lambda(1+\theta_0)}{1+\lambda}\exp(-\theta_0\lambda)\\
&&+\frac{\lambda^2\phi \exp(u\phi/c)}{c(1+\lambda)(u+\lambda)}\left[
\exp[-\theta_0(u+\lambda)]+(u+\lambda)\Gamma(0,\theta_0(u+\lambda))\right].
\end{eqnarray*}

Here, $\theta_0=\phi/c$ and $c>0$ is a constant premium intensity. Observe that when $u\to \infty$ $\psi(u)=1-\frac{1+\lambda(1+\theta_0)}{1+\lambda}\exp(-\theta_0\lambda)=\bar F(\theta_0)$, where $\bar F(\cdot)$ is the survival function of the Lindley distribution.

\subsection{Collective risk model}
Sarabia et al. (2016) obtained some closed-form expressions for the pdf of the total claim amount in the collective risk model, $S_N=X_1+X_2+\dots+X_N$, assuming that the secondary distribution is Pareto and several primary distributions for $N$. The pdf of the total claim amount can be computed by using
\begin{equation}
g_{S_N}(x)=\sum_{n=1}^\infty p_n f_{S_n}(x),\quad x>0\label{cmc}
\end{equation}
being $g_{S_N}(0)=p_0$.

It is easy to check using (\ref{cml}) and (\ref{cmc}) that if $N$ follows a Poisson distribution with parameter $\phi>0$ and the secondary distribution is the Lindley with parameter $\lambda>0$, then the total claim amount has pdf given by
\begin{eqnarray*}
g_{S_N}(x)=\frac{\lambda(\lambda+2)+x[2(\lambda+1)+\phi+x]}{(\lambda+1)(\lambda+x)^4}\phi\lambda^2 \exp\left[-\frac{\lambda  \phi }{\lambda +x}\right], \quad x>0,
\end{eqnarray*}
while $g_{S_N}(0)=\exp(-\phi)$.

On the other hand, if the primary distribution is negative binomial with parameters $r>0$ and $0<p<1$, then the pdf of the total claim amount results,
\begin{eqnarray}
g_{S_N}(x)=\frac{\lambda(\lambda+2)+x[p(x+\lambda-r+1)+\lambda+r+1]}{(\lambda+1)(\lambda+px)^{2+r}}
(x+\lambda)^{r-2}\lambda^2qrp^r,\label{cll}
\end{eqnarray}
for $x>0$, being $f_{S_N}(0)=p^r$ and $q=1-p$. Observe that if we assume in (\ref{cll}) $r=1$ we get the pdf of the total claim amount for the compound geometric-Lindley model.

Finally, if we assume the logarithmic distribution with parameter $0<\phi<1$ as primary distribution we get
\begin{eqnarray*}
g_{S_N}(x)=\frac{\lambda ^2 \phi  [x \phi  (\lambda +x+1)-(\lambda +x) (\lambda +x+2)]}{(\lambda +1) [(\lambda +x) (\lambda +x(1-\phi)]^2 \log (1-\phi )},\quad x>0,
\end{eqnarray*}
and $g_{S_N}(0)=0$ if $x<0$.

Expressions for the mean and variance of these compounds distributions can be derived easily.

\section{More General Dependent Models}\label{section5}

In this section we sketch one extension of the basic model. One of the previous extensions of the basic model (\ref{Model_11})-(\ref{Model_12}) was provided by Albrecher et al. (2011), using conditional marginals with survival function of the power form $(\bar G(x_i))^\theta$, for a particular baseline distribution $G(x)$. The new family is again Archimedean and includes the model with Pareto claims. However, the distribution of the aggregated risks does not look simple.

We consider below mixtures of classical gamma distributions. We have a random vector $(X_1,\dots,X_n)$, which are conditionally on $\Theta$ independent gamma distributions with shape parameter $\alpha_i$ and location $\theta$. The corresponding stochastic representation is,
\begin{eqnarray*}
X_i|\Theta=\theta &\sim& {\cal G}a(\alpha_i,\theta),\;\;i=1,2,\dots,n,\;\;independent\\
\Theta &\sim& F_\Theta(\cdot),
\end{eqnarray*}
with $\alpha_1,\dots\alpha_n>0$ and $\theta>0$. Taking $\alpha_i=1$ for all $i$ in previous model, we obtain (\ref{Model_11})-(\ref{Model_12})

The conditional joint density of $(X_1,\dots,X_n)$ given $\Theta$ is,
$$
\displaystyle f(x_1,\dots,x_n|\theta)=\prod_{i=1}^n\frac{x_i^{\alpha_i-1}}{\Gamma(\alpha_i)}\theta^{\tilde\alpha}e^{-\theta\sum_{i=1}^nx_i},\;\;x_i>0,\;i=1,2,\dots,n,
$$
where $\tilde\alpha=\sum_{i=1}^n\alpha_i$. The joint density of $(X_1,\dots,X_n)$ distribution is,
$$
\displaystyle f_{X_1,\dots,X_n}(x_1,\dots,x_n)=\prod_{i=1}^n\frac{x_i^{\alpha_i-1}}{\Gamma(\alpha_i)}\int_0^\infty\theta^{\tilde\alpha}e^{-\theta\sum_{i=1}^nx_i}dF_\Theta(\theta),
$$
where $x_i>0$, $i=1,2,\dots,n$, with marginal densities,
\begin{equation}\label{pdfmarginalgeneral}
\displaystyle f_{X_i}(x_i)=\frac{x_i^{\alpha_i-1}}{\Gamma(\alpha_i)}\int_0^\infty\theta^{\alpha_i}e^{-\theta x_i}dF_\Theta(\theta),
\end{equation}
$i=1,2,\dots,n$. It should be noted that $X_i$ are not equally distributed if $\alpha_i\neq\alpha_j$, $i\neq j$.

Taking into account that $S_n|\theta\sim {\cal G}a(\tilde\alpha,\theta)$ we obtain,
$$
\displaystyle f_{S_n}(x)=\frac{x^{\tilde\alpha-1}}{\Gamma(\tilde\alpha)}\int_0^\infty \theta^{\tilde\alpha}e^{-\theta x}dF_\Theta(\theta),\;\;x\ge 0,
$$
and $f_{S_n}(x)=0$ if $x<0$.

\subsection{Second kind beta mixtures of gamma distributions}

Assume the the mixing distribution is a second kind beta distribution, and then we consider the multivariate dependent risk model,
\begin{equation}\label{MultivariateSibuya}
(X_1,\dots,X_n)^\top=(G_{\alpha_1}H_{\beta,\gamma},\dots,G_{\alpha_n}H_{\beta,\gamma})^\top,
\end{equation}
where $G_{\alpha_i}\sim {\cal G}a(\alpha_i,1)$, $i=1,2,\dots,n$ are independent gamma random variables and $H_{\beta,\gamma}=\frac{G_\beta}{G_\gamma}\sim {\cal B}2(\beta,\gamma)$ is a second kind beta distribution independent of the gamma random variables. The marginal distribution of (\ref{MultivariateSibuya}) was proposed by Sibuya (1979), using the term ``gamma product-ratio distributions".  The marginal pdf of $X_i$ is given by (using (\ref{pdfmarginalgeneral})),
\begin{eqnarray*}
f_{X_i}(x)&=&\frac{\Gamma(\beta+\gamma)}{\Gamma(\alpha)\Gamma(\beta)\Gamma(\gamma)}x^{\alpha_i-1}\int_0^\infty e^{-x\theta}\frac{\theta^{\alpha_i+\gamma-1}}{(1+\theta)^{\beta+\gamma}}d\theta\\
&=&\frac{\Gamma(\alpha_i+\gamma)\Gamma(\beta+\gamma)}{\Gamma(\alpha_i)\Gamma(\beta)\Gamma(\gamma)}x^{\alpha_i-1}U(\alpha_i+\gamma,\alpha_i-\beta+1,x),
\end{eqnarray*}
with $x\ge 0$, $i=1,2,\dots,n$ and $U(a,b,z)$ represents the Kummer function or confluent hypergeometric function, defined by (Abramowitz and Stegun (1970), eq. 43.2.5),
\begin{equation}\label{confluent}
\displaystyle U(a,b,z)=\int_0^\infty e^{-zt}t^{a-1}(1+t)^{b-a-1}dt,
\end{equation}
with $a,z>0$.
The joint moments of (\ref{MultivariateSibuya}) are,
\begin{eqnarray*}
E(X_1^{r_1}\cdots X_n^{r_n})&=&E(G_1^{r_1})\cdots E(G_n^{r_n})E(H_{\beta,\gamma}^{r_1+\cdots+r_n})\\
&=&\prod_{i=1}^n\frac{\Gamma(\alpha_i+r_i)}{\Gamma(\alpha_i)}\frac{\Gamma(\beta+\bar r)\Gamma(\gamma-\bar r)}{\Gamma(\beta)\Gamma(\gamma)},
\end{eqnarray*}
if $\gamma>\tilde r$ and $\tilde r=\sum_{i=1}^nr_i$.

The distribution of the aggregated risk is given in the following Theorem.
\begin{theorem}
Let $(X_1,\dots,X_n)^\top$ be a multivariate Sibuya distribution defined by the stochastic representation (\ref{MultivariateSibuya}). The distribution of the aggregated risk is,
$$
S_n\sim G_{\bar\alpha}H_{\beta,\gamma},
$$
with pdf,
$$
f_{S_n}(x)=\frac{\Gamma(\tilde\alpha+\gamma)\Gamma(\beta+\gamma)}{\Gamma(\tilde\alpha)\Gamma(\beta)\Gamma(\gamma)}x^{\bar\alpha-1}
U(\tilde\alpha+\gamma,\tilde\alpha-\beta+1,x),\;\;x\ge 0
$$
and $f_{S_n}(x)=0$ if $x<0$, being $\tilde\alpha=\sum_{j=1}^n\alpha_j$, and $U(a,b,z)$ is defined in (\ref{confluent}).
\end{theorem}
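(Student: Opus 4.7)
The plan is to exploit the stochastic representation (\ref{MultivariateSibuya}) and factor out the random variable $H_{\beta,\gamma}$ that every component shares. Writing
$$S_n=\sum_{i=1}^n X_i=\sum_{i=1}^n G_{\alpha_i}\,H_{\beta,\gamma}=H_{\beta,\gamma}\sum_{i=1}^n G_{\alpha_i},$$
the sum $\sum_{i=1}^n G_{\alpha_i}$ of independent ${\cal G}a(\alpha_i,1)$ variables collapses, by the classical additivity of gamma laws with a common scale, to a single variable $G_{\tilde\alpha}\sim{\cal G}a(\tilde\alpha,1)$ with $\tilde\alpha=\sum_{i=1}^n\alpha_i$. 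Since $H_{\beta,\gamma}$ is independent of every $G_{\alpha_i}$ by construction, it is independent of $G_{\tilde\alpha}$ as well, so we obtain the distributional identity $S_n\stackrel{d}{=}G_{\tilde\alpha}\,H_{\beta,\gamma}$, which is the first assertion of the theorem.

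For the pdf, the key observation is that $S_n$ now has exactly the same functional form as a single marginal $X_i$, with the shape parameter $\alpha_i$ replaced by $\tilde\alpha$. Equivalently, conditional on the underlying mixing variable $\Theta$, one has $S_n\mid\Theta\sim{\cal G}a(\tilde\alpha,\Theta)$, so the mixture computation that produced (\ref{pdfmarginalgeneral}) applies verbatim and yields
$$f_{S_n}(x)=\frac{x^{\tilde\alpha-1}}{\Gamma(\tilde\alpha)}\int_0^\infty\theta^{\tilde\alpha}e^{-\theta x}\,dF_\Theta(\theta),\qquad x\geq 0.$$
Inserting the second kind beta density of $\Theta$ into this integral produces an integrand proportional to $\theta^{\tilde\alpha+\gamma-1}(1+\theta)^{-\beta-\gamma}e^{-x\theta}$, which is precisely of the form covered by the integral representation (\ref{confluent}) of the Kummer confluent hypergeometric function $U$ with parameters $a=\tilde\alpha+\gamma$ and $b=\tilde\alpha-\beta+1$.

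Collecting the normalising constant $\Gamma(\beta+\gamma)/[\Gamma(\beta)\Gamma(\gamma)]$ from the second kind beta density together with the factor $\Gamma(\tilde\alpha+\gamma)$ from the Kummer identification then delivers the stated closed form. The only step requiring mild care is the bookkeeping of the parameterisation of the mixing variable: depending on whether one conditions on $H_{\beta,\gamma}$ directly or on its reciprocal $\Theta=1/H_{\beta,\gamma}$, the roles of $\beta$ and $\gamma$ in the conditioning density are interchanged (since the reciprocal of a ${\cal B}2(\beta,\gamma)$ variable is ${\cal B}2(\gamma,\beta)$), and this swap must be kept consistent with the parameterisation used in the earlier derivation of $f_{X_i}$. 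Once that bookkeeping is fixed, the remainder is a direct substitution, and no genuinely new computation is required beyond what was carried out for the marginal densities.
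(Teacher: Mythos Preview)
Your proposal is correct and follows essentially the same route as the paper: factor out the common $H_{\beta,\gamma}$, use the additivity of independent gamma variables with common scale to replace $\sum_i G_{\alpha_i}$ by $G_{\tilde\alpha}$, and then observe that $S_n$ has the same distributional form as a single marginal $X_i$ with $\alpha_i$ replaced by $\tilde\alpha$. The paper's proof stops at the distributional identity $S_n\stackrel{d}{=}H_{\beta,\gamma}G_{\tilde\alpha}$ and leaves the pdf as an immediate consequence of the marginal computation already carried out; your added paragraph spelling out the mixture integral and the identification with the Kummer integral (\ref{confluent}), together with the remark about the $\beta\leftrightarrow\gamma$ swap when passing between $H_{\beta,\gamma}$ and $\Theta=1/H_{\beta,\gamma}$, is correct extra detail but not a different argument.
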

\begin{proof}
the proof is direct taking into account that the distribution of the sum is,
$$
S_n=\sum_{j=1}^nG_{\alpha_i}H_{\beta,\gamma}=H_{\beta,\gamma}\sum_{j=1}^nG_{\alpha_i}\stackrel{d}{=}H_{\beta,\gamma}G_{\tilde\alpha},
$$
where $\stackrel{d}{=}$ means equally distributed and $\tilde\alpha=\sum_{j=1}^n\alpha_j$.
\end{proof}\\

The moments of $S_n$ are,
$$
E(S_n^r)=\frac{\Gamma(\tilde\alpha+r)\Gamma(\beta+r)\Gamma(\gamma-r)}{\Gamma(\tilde\alpha)\Gamma(\beta)\Gamma(\gamma)},
$$
if $\gamma>r$.

\section{Conclusions and future research}\label{section6}

In this paper, we have obtained analytic expressions for the probability density function and the cumulative distribution function of aggregated risks, where the risks are modeled according to a mixture of exponential distributions. We have studied some specific models, describing the claims with Pareto (Sarabia et al, 2016), Gamma, Weibull distributions and inverse Gaussian mixture of exponentials (Whitmore and Lee, 1991). We have also proposed an extension of the basic model based on mixtures of gamma distributions. This research can be extended in several different ways. The first obvious extension would be to consider other different Archimedean copulas including the Ali-Mikhail-Haq, Frank and Joe families. A second possibility would be to work with other general classes of Laplace transforms. One of these classes is the family proposed by Hougaard (1986), which includes as a limit case the Laplace transform of the gamma distribution and then the Pareto claims and the Clayton copula. Other extensions include the model discussed in Section \ref{section5}. Eventually, model based on mixtures of Pareto distributions can also be considered. In the Appendix, we include some brief comments about the asymptotic behavior of the aggregated risks in this model. All these points will be addressed in future research.

\section*{Acknowledgements}

The authors thanks to the Ministerio de Econom\'ia y Competitividad (projects ECO2016-76203-C2-1-P, JMS, FP and VJ ECO2013-47092 EGD) for partial support of this work. In addition, this work is part of the Research Project APIE 1/2015-17 (JMS, FP, VJ): ``New methods for the empirical analysis of financial markets" of the Santander Financial Institute (SANFI) of UCEIF Foundation resolved by the University of Cantabria and funded with sponsorship from Banco Santander.


\section*{Appendix}

\subsection*{The partial Bell polynomials 1}\label{appendix1}

The partial or incomplete exponential Bell polynomials are a triangular array of polynomials defined by,
\begin{equation*}
\displaystyle B_{n,k}(x_1,\dots,x_{n-k+1})=\sum\frac{n!}{j_1!\cdots j_{n-k+1}!}\left(\frac{x_1}{1!}\right)^{j_1}\cdots \left(\frac{x_{n-k+1}}{(n-k+1)!}\right)^{j_{n-k+1}},
\end{equation*}
where the sum is taken over all sequences $j_1,\dots,j_{n-k+1}$ of non-negative integers such that $\sum_{i=1}^{n-k+1}j_i=k$ and $\sum_{i=1}^{n-k+1}ij_i=n$.

\noindent In \verb"Mathematica" the partial Bell polynomials are

$$\verb"BellY"[n,k,{x_{1},...,x_{n-k+1}}]$$

\noindent Then for $n=2$ we have,
\begin{eqnarray*}
B_{2,1}(x_1,x_2)&=&x_2,\\
B_{2,2}(x_1)&=&x_1^2,
\end{eqnarray*}
\noindent for $n=3$,
\begin{eqnarray*}
B_{3,1}(x_1,x_2,x_3)&=&x_3,\\
B_{3,2}(x_1,x_2)&=&3x_1x_2,\\
B_{3,3}(x_1)&=&x_1^3,
\end{eqnarray*}
\noindent for $n=4$,
\begin{eqnarray*}
B_{4,1}(x_1,x_2,x_3,x_4)&=&x_4,\\
B_{4,2}(x_1,x_2,x_3)&=&3x_2^2+4x_1x_3,\\
B_{4,3}(x_1,x_2)&=&6x_1^2x_2,\\
B_{4,4}(x_1)&=&x_1^4,
\end{eqnarray*}
\noindent for $n=5$,
\begin{eqnarray*}
B_{5,1}(x_1,x_2,x_3,x_4,x_5)&=&x_5,\\
B_{5,2}(x_1,x_2,x_3,x_4)&=&10x_2x_3+5x_1x_4,\\
B_{5,3}(x_1,x_2,x_3)&=&15x_1x_2^2+10x_1^2x_3,\\
B_{5,4}(x_1,x_2)&=&10x_1^3x_2,\\
B_{5,5}(x_1)&=&x_1^5,
\end{eqnarray*}
\noindent and so on.

\subsection*{The Leibniz and Fa\`a di Bruno formulas}

Some of the formulas for the $n$-th derivative of the Laplace transform can be written as the $n$-th derivative of the product and the composition of functions. The following lemma provides the Leibniz formula.
\begin{lemma}
If $f$ and $g$ are $n$-times differentiable functions, the product $fg$ is also $n$-times differentiable and its $n$-th derivative is given by,
\begin{equation}\label{LeibnizFormula}
\displaystyle\frac{d^n}{dx^n}(fg)(x)=\sum_{k=0}^n{n\choose k}f^{(n-k)}(x)g^{(k)}(x).
\end{equation}
\end{lemma}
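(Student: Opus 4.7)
The plan is to prove the Leibniz formula by induction on $n$, using the ordinary product rule as the base case and Pascal's identity to combine binomial coefficients in the inductive step.

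For the base case $n=1$, the formula reduces to $(fg)'(x) = f'(x)g(x) + f(x)g'(x)$, which is the standard product rule and can be assumed. For the inductive step, I would assume the formula holds for some $n \ge 1$ and differentiate both sides with respect to $x$. Writing
\[
\frac{d^{n+1}}{dx^{n+1}}(fg)(x) = \frac{d}{dx}\sum_{k=0}^n \binom{n}{k} f^{(n-k)}(x)\, g^{(k)}(x),
\]
and applying the product rule termwise gives
\[
\sum_{k=0}^n \binom{n}{k}\bigl[f^{(n-k+1)}(x)\, g^{(k)}(x) + f^{(n-k)}(x)\, g^{(k+1)}(x)\bigr].
\]

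The next step is to reindex the second sum (letting $j = k+1$) so both sums range over comparable indices, and then to collect the coefficient of $f^{(n+1-k)}(x)\, g^{(k)}(x)$ for each $k \in \{0,1,\dots,n+1\}$. This coefficient will be $\binom{n}{k} + \binom{n}{k-1}$ for $1 \le k \le n$, with the boundary terms $k=0$ and $k=n+1$ contributing coefficient $1 = \binom{n+1}{0} = \binom{n+1}{n+1}$. Pascal's identity $\binom{n}{k} + \binom{n}{k-1} = \binom{n+1}{k}$ then finishes the identification, yielding
\[
\frac{d^{n+1}}{dx^{n+1}}(fg)(x) = \sum_{k=0}^{n+1}\binom{n+1}{k} f^{(n+1-k)}(x)\, g^{(k)}(x),
\]
which is the desired formula with $n$ replaced by $n+1$.

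There is no real obstacle here; the only care needed is the bookkeeping when reindexing the second sum and verifying that the boundary contributions at $k=0$ and $k=n+1$ match the binomial coefficients $\binom{n+1}{0}$ and $\binom{n+1}{n+1}$. The $n$-times differentiability of $fg$ is automatic from the inductive representation, since each term $f^{(n-k)} g^{(k)}$ is a product of differentiable functions and therefore differentiable again.
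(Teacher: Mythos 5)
Your induction argument is correct and complete: the base case is the product rule, the inductive step differentiates termwise, reindexes, and applies Pascal's identity, and the boundary terms are handled properly. The paper offers no proof of this lemma at all --- it is stated in the Appendix as the classical Leibniz rule and simply cited as a known tool --- so there is nothing to compare against; your proof is the standard one and fills the gap adequately.
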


Other formulas can be written using the Fa\`a di Bruno formula (Krantz and Parks, 2002), which is an identity that generalized the chain rule to higher derivatives. We consider the version of the  Fa\`a di Bruno formula in terms of the Partial Bell polynomials
\begin{lemma}\label{FaadiBruno}
The $n$th derivative of the composition of two functions $f(g(x))$ can be written as,
\begin{equation}\label{FormulaFaadiBruno}
\displaystyle\frac{d^n}{dx^n}f(g(x))=\sum_{k=1}^nf^{(k)}(g(x))B_{n,k}\left(g'(x),g^{''}(x),\dots,g^{(n-k+1)}(x)\right),
\end{equation}
where $B_{n,k}(x_1,\dots,x_{n-k+1})$ are the partial Bell polynomials.
\end{lemma}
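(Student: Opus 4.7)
The plan is to prove the Faà di Bruno formula by induction on $n$, with the base case $n=1$ being the ordinary chain rule $(f\circ g)'(x)=f'(g(x))g'(x)$, which matches the claimed formula since $B_{1,1}(g'(x))=g'(x)$ (the only partition of $1$ into $1$ block). For the inductive step, I assume the formula holds for $n$ and differentiate both sides with respect to $x$ once more.

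The inductive step reduces to differentiating each term $f^{(k)}(g(x))\, B_{n,k}(g'(x),\dots,g^{(n-k+1)}(x))$ by the product rule. The first piece gives $f^{(k+1)}(g(x))\, g'(x)\, B_{n,k}$, and the second piece gives $f^{(k)}(g(x))$ times the total $x$-derivative of the Bell polynomial, where one uses the chain rule $\partial_{x_j} B_{n,k}\cdot g^{(j+1)}(x)$ summed over $j$. After reindexing so that the coefficient of $f^{(k)}(g(x))$ is collected on the right-hand side for fixed $k$, the remaining task is to verify the polynomial identity
\begin{equation*}
B_{n+1,k}(y_1,\dots,y_{n-k+2})=y_1\,\frac{\partial B_{n,k}}{\partial x_{?}}\cdots\ \text{—more precisely—}\ \sum_{j=1}^{n-k+2}\binom{n}{j-1}y_{j}\,B_{n-j+1,k-1}(y_1,\dots,y_{n-j-k+3}),
\end{equation*}
which is the standard recurrence for partial Bell polynomials that follows directly from their combinatorial definition as a sum over set partitions (selecting the block containing the element $n+1$ of size $j$).

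An alternative route, which I would mention as a cross-check, is to use exponential generating functions. From the definition in Appendix~1 one verifies that
\begin{equation*}
\sum_{n\ge k}B_{n,k}(x_1,x_2,\dots)\frac{t^n}{n!}=\frac{1}{k!}\Bigl(\sum_{j\ge 1}x_j\frac{t^j}{j!}\Bigr)^{k}.
\end{equation*}
Formally substituting $x_j=g^{(j)}(x)$ and summing $\sum_{k\ge 1}\frac{f^{(k)}(g(x))}{k!}(\cdot)^k$ reproduces the Taylor expansion of $f(g(x+t))$ in $t$, so extracting the coefficient of $t^n/n!$ yields the claim. This EGF derivation avoids the explicit recurrence and makes the structure of the formula transparent.

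The main obstacle will be the bookkeeping in the inductive step: matching the sum arising from differentiating $B_{n,k}$ with the Bell-polynomial recurrence, in particular getting the binomial coefficients $\binom{n}{j-1}$ to line up. Because this combinatorial identity is standard and is a direct consequence of the multinomial structure in the definition of $B_{n,k}$, I would invoke it by reference rather than reprove it, so the proof reduces to the base case plus a clean application of the product and chain rules combined with this recurrence.
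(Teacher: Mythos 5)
The paper offers no proof of this lemma at all: it is presented as a classical identity with a citation to Krantz and Parks (2002), so there is no in-paper argument to compare yours against. Your proposal is the standard textbook proof and is essentially correct. In the inductive step, collecting the coefficient of $f^{(k)}(g(x))$ after one more differentiation, what you must verify is
\begin{equation*}
g'(x)\,B_{n,k-1}+\frac{d}{dx}B_{n,k}\bigl(g'(x),\dots,g^{(n-k+1)}(x)\bigr)=B_{n+1,k}\bigl(g'(x),\dots,g^{(n-k+2)}(x)\bigr),
\end{equation*}
and this does follow from the identity $\partial B_{n,k}/\partial x_j=\binom{n}{j}B_{n-j,k-1}$ combined with the recurrence you quote, $B_{n+1,k}=\sum_{j\ge 1}\binom{n}{j-1}x_j\,B_{n+1-j,k-1}$, whose $j=1$ term absorbs the $g'B_{n,k-1}$ contribution; both identities are immediate from the combinatorial definition of $B_{n,k}$ given in the paper's Appendix, so invoking them by reference is legitimate. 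The only defect is presentational: the display in your inductive step is garbled (it contains a placeholder $\partial B_{n,k}/\partial x_{?}$ and an inline ``more precisely'') and should be replaced outright by the clean recurrence. Your exponential-generating-function cross-check is likewise a valid and self-contained alternative derivation. Either route establishes the lemma; relative to the paper you are simply supplying a proof where the authors chose to cite one.
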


\subsection*{Risk measures: TVaR and tail moments}

In this Section we show how to compute TVaR and in general tail moments for a finite mixture of distributions. First, we provides a Lemma for computing the tail moments.

\begin{lemma}\label{LemmaTailMoments}
Let $X$ be a positive random variable with cdf $F_X(\cdot)$ and $E(X^r)<\infty$, with $r\in\mathbb{N}$. If $a>0$ we have,
$$
\displaystyle E(X^r|X>a)=E(X^r)\frac{1-F_X^{(r)}(a)}{1-F_X(a)},
$$
where $F_X^{(r)}(x)=\frac{\int_0^xt^rf_X(t)dt}{E(X^r)}$ is the cdf of the incomplete $r$th moment.
\end{lemma}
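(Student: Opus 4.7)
The plan is to prove the identity by direct computation starting from the definition of conditional expectation. The whole argument is essentially a bookkeeping exercise once the incomplete-moment cdf $F_X^{(r)}$ has been defined, so I would present it as a short two-step calculation rather than anything substantive.

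First I would write down
\[
E(X^r \mid X>a) \;=\; \frac{E\!\left(X^r\,\mathbf{1}_{\{X>a\}}\right)}{P(X>a)} \;=\; \frac{\int_a^\infty t^r f_X(t)\,dt}{1-F_X(a)},
\]
which uses only the definition of conditional expectation together with the definition of the cdf. This already matches the denominator of the claimed formula, so the task reduces to rewriting the numerator.

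For the numerator, I would use positivity of $X$ (so that integration over $(0,\infty)$ captures the full mass) to split
\[
\int_a^\infty t^r f_X(t)\,dt \;=\; \int_0^\infty t^r f_X(t)\,dt \;-\; \int_0^a t^r f_X(t)\,dt \;=\; E(X^r)\;-\;E(X^r)\,F_X^{(r)}(a),
\]
where in the last step I recognize $\int_0^a t^r f_X(t)\,dt = E(X^r)\,F_X^{(r)}(a)$ directly from the definition of $F_X^{(r)}$ supplied in the lemma statement. Factoring $E(X^r)$ and dividing by $1-F_X(a)$ then yields the claimed expression.

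There is no real obstacle here: the hypothesis $E(X^r)<\infty$ is exactly what is needed for $F_X^{(r)}$ to be a bona fide cdf on $[0,\infty)$, and positivity of $X$ is what makes the splitting of the integral clean (no contribution from $(-\infty,0)$). The only minor point worth noting in a careful write-up is that the statement implicitly assumes $X$ has a density $f_X$; if not, the same computation goes through verbatim with $dF_X(t)$ in place of $f_X(t)\,dt$ and $F_X^{(r)}$ defined as a Stieltjes integral.
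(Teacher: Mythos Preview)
Your proof is correct and follows exactly the same route as the paper: the paper's own proof simply writes $E(X^r\mid X>a)=\dfrac{\int_a^\infty x^r\,dF_X(x)}{1-F_X(a)}$ and declares the rest ``direct'', whereas you have spelled out the splitting of the numerator using the definition of $F_X^{(r)}$. There is nothing to add.
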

\begin{proof}
The proof is direct since,
$$
\displaystyle E(X^r|X>a)=\frac{\int_a^\infty x^rdF_X(x)}{1-F_X(a)}.
$$
\end{proof}

This theorem can be used in some previous aggregated distributions.
\begin{theorem}
Let $X$ be a positive random variable which is a finite mixture of positive random variables, that is, the cdf of $X$, $F_X(\cdot)$ can be written as,
$$
F_X(x)=\pi_1F_{X_1}(x)+\cdots+\pi_nF_{X_n}(x),\;\;x\ge 0,
$$
where $F_{X_i}(x)$ are the cdf of the random variables $X_i$, $i=1,2,\dots,n$ and $\sum_{i=1}^n\pi_i=1$. Then, if $E(X^r)<\infty$, the $r$th upper tail moment of $X$ can be written as,
$$
E(X^r|X>a)=\sum_{k=1}^n\frac{\pi_k\{1-F_{X_k}(a)\}}{1-F_X(a)}E(X_k^r|X_k>a),
$$
with $a\ge 0$.
\end{theorem}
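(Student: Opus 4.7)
The plan is to reduce the claim to the defining integral for a conditional expectation and then exploit the linearity of the Lebesgue integral against the mixture density. Specifically, I would first differentiate the mixture cdf to obtain $f_X(x) = \sum_{k=1}^n \pi_k f_{X_k}(x)$, which is immediate since finite sums and differentiation commute.

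Next, I would write out the conditional expectation directly:
\begin{equation*}
E(X^r \mid X > a) = \frac{\int_a^\infty x^r f_X(x)\,dx}{1 - F_X(a)} = \frac{1}{1 - F_X(a)} \sum_{k=1}^n \pi_k \int_a^\infty x^r f_{X_k}(x)\,dx,
\end{equation*}
where interchanging the finite sum with the integral is trivially justified. Then I would recognize each inner integral as $E(X_k^r \mid X_k > a)\,\{1 - F_{X_k}(a)\}$, which follows from the same definitional identity applied to each mixture component $X_k$ (and is essentially Lemma \ref{LemmaTailMoments} applied componentwise without needing its incomplete-moment reformulation). Substituting this back gives exactly the stated formula.

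There is essentially no obstacle here; the result is a one-line consequence of linearity. The only thing worth flagging in the write-up is that the hypothesis $E(X^r) < \infty$ ensures each $E(X_k^r) < \infty$ (since $\pi_k > 0$ and $\pi_k E(X_k^r) \le E(X^r)$), so every conditional moment appearing on the right-hand side is well defined, and that $1 - F_X(a) > 0$ so that the conditioning event has positive probability. Once these are noted, the proof is simply the display above followed by the substitution.
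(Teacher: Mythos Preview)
Your proposal is correct and follows essentially the same route as the paper: both arguments write the tail moment as an integral, split it by linearity over the mixture components, and identify each piece as $\{1-F_{X_k}(a)\}E(X_k^r\mid X_k>a)$. The paper phrases this via Stieltjes integrals $dF_X$ and invokes Lemma~\ref{LemmaTailMoments}, whereas you work with densities and the raw definition, but the content is identical.
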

\begin{proof}
The proof is also direct taking into account,
$$
\displaystyle\int_0^\infty x^rdF_X(x)=\sum_{k=1}^n\pi_k\int_0^\infty x^rdF_{X_k}(x),
$$
together with Lemma \ref{LemmaTailMoments}.
\end{proof}

\subsection*{Asymptotic probabilities for mixtures of Pareto distribution}
Undoubtedly, the single parameter Pareto distribution is commonly used as a basis
for excess of loss quotations as it gives a pretty good description of the
random behaviour of large losses. Moreover, approximating the tail probability of $S_n$ when the losses have Pareto and other distributions is a recurrent problem encountered in the actuarial literature (see, for instance Goovaerts et al. (2005)).

Rohener and Winniwarter (1985) (see also Arnold (1983)) obtained the limiting distribution of an infinite sum of  non i.i.d. classical Pareto variables and linked their results with the theory of stable distributions. Consider the classical Pareto distribution (Arnold (1983) and Rohener and Winniwarter (1985)) with shape parameter $\theta_i>0$ and precision parameter $\beta>0$ and assume that $X_1,\dots,X_n$ are independent but non--identical random variables following a classical Pareto distribution. Rohener and Winniwarter (1985) have proven that as $x\to\infty$,
\begin{eqnarray*}
f_{S_n|\Theta}(x|\theta)\sim \frac{\theta \beta^{m\theta}}{x^{\theta+1}},
\end{eqnarray*}
where $\theta=\min\{\theta_1,\dots,\theta_n\}$ and $m$ denotes the index of the smallest $\theta_i$. Let $\Theta$ be a positive random variable with cdf $F_\Theta(\cdot)$ and Laplace transform $L_{\Theta}(\cdot)$. Thus, the asymptotic behaviour of the unconditional distribution of $S_n$ results
\begin{eqnarray}
f_{S_n}(x)\sim -\frac{d}{dx}L_{\Theta}\left[\log\left(\frac{x}{\beta^m}\right)\right],\label{acpp}
\end{eqnarray}
as it is straightforward to check.

Using (\ref{acpp}), it is easy to get the asymptotic behaviour of the mixture of $f_{S_n|\Theta}(x|\theta)$ with the gamma, $\Theta\sim{\cal G}a(\alpha,\lambda)$, and inverse Gaussian, $\Theta\sim {\cal IG}(\lambda,\mu)$, acting as mixing distributions,  which is expressed as

\begin{eqnarray*}
f_{S_n}(x) & \sim & \frac{\alpha\lambda^{\alpha}}{x\left[\lambda+\log x-m\log\beta\right]^{\alpha+1}},\\
f_{S_n}(x) & \sim & \frac{1}{x}\sqrt{\frac{\lambda}{\varphi_{\lambda,\mu,\beta,m}(x)}}\exp\left[\frac{\lambda}{\mu}-\sqrt{\lambda\varphi_{\lambda,\mu,\beta,m}(x)}\right],
\end{eqnarray*}
respectively, where $\varphi_{\lambda,\mu,\beta,m}(x)=\frac{\lambda}{\mu^2}+2\log\left(\frac{x}{\beta^m}\right)$.

\section*{References}

\begin{description}

\item Abramowitz, M., and Stegun, I.A. (1970). Handbook of Mathematical Functions. Dover Publications. Inc., New York.

\item Albrecher, H., Constantinescu, C., Loisel, S. (2011). Explicit ruin formulas for models with dependence among risks. Insurance: Mathematics and Economics, 48, 265-270.

\item Albrecher, H., Kortschak, D. (2009). On ruin probability and aggregate claim representations for Pareto claim size distributions. Insurance: Mathematics and Economics, 45, 362-373.

\item Arbenz, P., Hummel, C., Mainik, G. (2012).  Copula based hierarchical risk aggregation through sample reordering. Insurance: Mathematics and Economics, 51, 122-133.

\item Arnold, B.C. (1983). Pareto Distributions. International Cooperative Publishing House, Fairland, MD.

\item Arnold, B.C. (2015). Pareto Distributions, Second Edition. Chapman \& Hall/CRC Monographs on Statistics \& Applied Probability, Boca Rat\'on, FL.

\item Asgharzadeh, A., Nadarajah, S. and Sharafi, F. (2017). Generalized inverse Lindley distribution with application to Danish fire insurance data. Communications in Statistics-Theory and Methods, 46, 10 (in press)

\item Asmussen, S., Albrecher, H. (2010). Ruin Probabilities, second ed. World Scientific, New Jersey.

\item Barlow, R.E., Proschan, F. (1981). Statistical Theory of Reliability and Life Testing. Holt, Rinehart and Winston, New York.

\item B{\o}lviken, E., Guillen, M. (2017). Risk aggregation in Solvency II through recursive log-normals. Insurance: Mathematics and Economics, 73, 20-26

\item Coqueret, G. (2014). Second order risk aggregation with the Bernstein copula. Insurance: Mathematics and Economics, 58, 150-158.

\item Cossette, H., C\^ot\'e, M.P., Marceau, E., Moutanabbir, K. (2013). Multivariate distribution defined with Farlie-Gumbel-Morgenstern copula and mixed Erlang marginals: Aggregation and capital allocation. Insurance: Mathematics and Economics, 52, 560-572.

\item C\^ot\'e, M.P., Genest, C. (2015). A copula-based risk aggregation model. The Canadian Journal of Statistics, 43, 60-81.

\item Dacarogna, M., Elbahtouri, L., Kratz, M. (2015). Explicit Diversification Benefit for Dependent Risks. Working paper, ESSEC Business School.

\item Esary, J.D., Proschan, F., Walkup, D.W. (1967). Association of random variables, with applications. Ann. Math. Statist. 38, 1466-1474.

\item Feller, W. (1971). An Introduction to Probability Theory and its Applications, Volume 1. Second Edition. John Wiley, New York.

\item Genest, C., and MacKay, J. (1986). The joy of copulas: bivariate distributions with uniform marginals. The American Statistician, 40, 280-283.

\item Gijbels, I., Herrmann, K. (2014). On the distribution of sums of random variables with copula-induced dependence. Insurance: Mathematics and Economics, 59, 27-44.

\item Gleser, L.J. (1989). The gamma distribution as a mixture of exponential distributions. The American Statistician, 43, 115-117.

\item G\'omez--D\'eniz, E., Sarabia, J.M. and Balakrishnan, N. (2012). A multivariate discrete Poisson-Lindley distribution: extensions and actuarial applications. ASTIN Bulletin, 42, 2, 655-678.

\item Good, I.J. (1953). The population frequencies of species and the estimation of population parameters. Biometrika 40, 237-260.

\item Goovaerts, M., Kaas, R., Laeven, R., Tang, Q. and Vernic, R. (2005). The tail probability of discounted sums of Pareto-like losses in insurance. Scandinavian Actuarial Journal, 6, 446-461.

\item Gradshteyn, L.S., Ryzhik, I. M. (1980). Table of Integrals, Series and Products, New York: Academic Press.

\item Guill\'en, M., Sarabia, J.M., Prieto, F. (2013). Simple risk measure calculations for sums of positive random variables. Insurance: Mathematics and Economics, 53, 273-280.

\item Jewell, N.P. (1982). Mixtures of exponential distributions. The Annals of Statistics, 10, 479-484.

\item Hashorva, E., Ratovomirija, G. (2015). On Sarmanov Mixed Erlang Risks in Insurance Applications. ASTIN Bulletin, 45, 175-205.

\item Hougaard, P. (1986). Survival models for heterogeneous populations derived from stable distributions. Biometrika, 73, 387-396.

\item Johnson, N.L., Kotz, S., Balakrishnan, N. (1994). Continuous Univariate Distributions. Vol.1, second ed., John Wiley, New York.

\item Klugman, S.A., Panjer, H.H., Willmot, G.E. (2008). Loss Models. From Data to Decisions. Third Edition. John Wiley, New York.

\item Krantz, Steven G., Parks, Harold R. (2002). A Primer of Real Analytic Functions, Birkh\"auser Advanced Texts - Basler Lehrb\"ucher (Second ed.), Boston: Birkh\"auser Verlag,

\item Lee, M-L.T., Gross, A.J. (1989). Properties of conditionally independent generalized gamma distributions. Probability in the Engineering and Informational Sciences, 3, 289-297.

\item  Lindley, D.V., (1958). Fiducial distributions and Bayes's theorem. Journal of the Royal Statistical Society. Series B, 20, 102-107.

\item Lindley, D.V., Singpurwalla, N.D. (1986). Multivariate distributions for the life lengths of components of a system sharing a common environment. Journal of Applied Probability, 23, 418-431.

\item McDonald, J.B. (1984). Some generalized functions for the size distribution of income. Econometrica, 52, 647-663.

\item Nayak, T.K. (1987). Multivariate Lomax distribution: properties and usefulness in reliability theory. Journal of Applied Probability, 24, 170-177.

\item Nelsen,R. (1999). An Introduction to Copulas. Springer, New York.

\item Oakes, D. (1989). Bivariate survival models induced by frailties. Journal of the American Statistical Association, 84,487-493.

\item Roehner, B. and Winniwarter, P. (1985). Aggregation of independent Paretian random variables. Adv. Appl. Prob., 17, 465-469.

\item Roy, D., Mukherjee, S.P. (1988). Generalized mixtures of exponential distributions. Journal of Applied Probability, 25, 510-518.

\item Sarabia, J.M., G\'omez-D\'eniz, E., Prieto, F., Jord\'a, V. (2016). Risk aggregation in multivariate dependent Pareto distributions. Insurance: Mathematics and Economics, 71, 154-163.

\item Sibuya, M. (1979). Generalized Hypergeometric, Digamma and Trigamma Distributions. Annals of the Institute of Statistical Mathematics, 31, 373-390.

\item Vernic, R. (2016). On the distribution of a sum of Sarmanov distributed random variables. Journal of Theoretical Probability, 29, 118-142.

\item Watson, G.N. (1995). A Treatise on the Theory of Bessel Functions, 2nd ed. Cambridge University Press.

\item Whitmore, G.A. (1988). Inverse Gaussian mixtures of exponential distributions, unpublished paper, McGill University, Faculty of management.

\item Whitmore, G.A., Lee, M.-L.T. (1991). A multivariate survival distribution generated by an inverse Gaussian mixture of exponentials. Technometrics, 33, 39-50.

\end{description}

\end{document}